\newcommand{\In}{{\mathrm {In}}}
\newcommand{\IN}{\mathds{N}}
\newcommand{\dG}{\mathds{G}}
\renewcommand{\leq}{\leqslant}
\renewcommand{\geq}{\geqslant}
\newcommand{\Ker}{{\mathrm{Ker}}}
\newcommand{\Roo}{{\mathrm{Roots}}}
\newcommand{\CRoo}{{\mathrm{CRoots}}}
\renewcommand{\leq}{\leqslant}
\renewcommand{\geq}{\geqslant}
\newcommand{\Ina}{{\mathrm {In}}^{{\mathrm{a}}}}
\newcommand{\idG}{\overline{\mathds{G}}}
\newcommand{\eqdef}{\stackrel{\text{def}}{=}}
\newcommand{\p}{\delta}
\newcommand{\dGa}{\mathds{G}^{{\mathrm{a}}}}
\newcommand{\AGE}{\mathrm{AGE}}
\newcommand{\isc}{infinitely connected}
\newtheorem{thm}{Theorem}
\newtheorem{prop}[thm]{Proposition}
\newtheorem{lem}[thm]{Lemma}
\newtheorem{cor}[thm]{Corollary}
\newtheorem{defi}[thm]{Definition}
\newcommand{\ignore}[1]{}
\begin{document}

\title{MinMax Algorithms for Stabilizing Consensus}

 \author{%
 Bernadette Charron-Bost\textsuperscript{1} 
  \and Shlomo Moran\textsuperscript{2}}
 \date{\textsuperscript{1} CNRS, \'Ecole polytechnique, 91128 Palaiseau, France\\
 \textsuperscript{2} Department of Computer Science, Technion, Haifa, Israel 32000\\~\\
 \today
  }




\maketitle

\begin{abstract}
In the \emph{stabilizing consensus} problem, each agent  of a networked system  has an input value 
	and is repeatedly writing an output value; it is required that eventually all the output values  stabilize 
	to the same value which, moreover,  must be one of the input values.
We study this problem for a synchronous model with identical and anonymous agents 
	that are connected by a time-varying topology.
Our main result is a generic MinMax algorithm that solves the  stabilizing consensus problem in this model 
	when, in each sufficiently long but bounded period of time, there is an agent, called a root,  that can send 
	messages, possibly indirectly,  to all the agents.
Such topologies  are highly dynamic (in particular, roots may change arbitrarily over time)  
	and enforce no strong connectivity property (an agent may be never a root).
Our distributed MinMax algorithms require neither central control (e.g., synchronous starts)  
	nor any global information  (eg.,on the size of the network), 
	and are quite efficient  in terms of message size and storage requirements.

\end{abstract}

\section{ Introduction}\label{sec:intro}

There has been much recent interest in distributed control and coordination of networks consisting of 
	multiple mobile agents. 
This is motivated by the emergence of large scale networks with  no central control and time-varying topology. 
The  algorithms deployed in such networks ought to be completely distributed, using only local  information, 
	and robust against unexpected changes in topology, despite the lack of global coordination like synchronous starts.
	
A canonical problem in distributed control is the \emph{stabilizing consensus}  problem~\cite{AFJ06,DAMSS11,BCNPT16}:  
	each agent~$u$ starts with some initial value and repeatedly updates  an output variable~$y_u$ which eventually 
	stabilizes on the same input value.
The  stabilizing  consensus problem arises in a number of applications including eventual consistency in replicated databases
	(see eg.,~\cite{Vog09}), motion of autonomous agents \cite{VCBCS95}, 
	and blockchain agreement~\cite{Nak08,BMCNKF15}.
Similarly,  the stable computation of a predicate in the model of population protocols~\cite{ADFP04} may be seen as a variant
	of  stabilizing  consensus, in which the stabilized output value  is the truth value of  some  predicate of the multiset of
	initial values. 	

A stronger form of agreement is captured by the classical \emph{consensus} problem which differs from  stabilizing 
	consensus in the fact that all the output variables $y_u$ are write-once: when the agent~$u$  is aware 
	that agreement has been reached on some initial value~$\mu$,  it writes~$\mu$ in~$y_u$, in which case~$u$ 
	is said to \emph{decide on}~$\mu$.
Hence the discrepancy between  stabilizing  consensus and consensus typically lies in this additional requirement of 
	irrevocable decisions.	

On the other side, \emph{asymptotic consensus} is a classical weakening of  stabilizing  consensus: 
	in the case of initial values that are real numbers, agents are only 
	required to compute the same outcome \emph{asymptotically}.
In other words, the condition of  eventual stability on the variables~$y_u$ is replaced by the weaker one of  convergence.
Moreover the limit value is only required to be in the range of the initial values, 
	which prevents the applicability of asymptotic consensus to the class of problems where the limit value must be one of the
	initial values.


Although there is a plethora of papers on agreement problems in multi-agent systems, few are specifically 
	devoted to  stabilizing  consensus.
To the best of our knowledge, the problem has been first investigated by  Angluin, Fischer, and Jiang~\cite{AFJ06}. 
They studied solvabilty of stabilizing consensus in an asynchronous totally connected system where agents 
	have distinct  identifiers and  may experience various type of faults, focusing on Byzantine faults.
 This problem  has been studied later  in \cite{DAMSS11,BCNPT16} in the synchronous  \emph{gossip model}.
 These papers	propose  randomized stabilizing consensus algorithms with convergence times that are functions of the number 
	of possible input  values.  
 
The original consensus problem, with irrevocable decisions, has been the subject of much more study,  	
	specifically in the context of fault-tolerance and a fixed topology.
There is also a large body of previous work on consensus in dynamic networks. 
In the latter  works, agents are  supposed to start synchronously, to share global informations 
	on the network, and to have distinct identifiers~\cite{CG13}.
Moreover,	 topology changes are dramatically restricted~\cite{BRS12}, or  communication 
	graphs are supposed to be permanently bidirectional and connected~\cite{KMO11}. 
	
The asymptotic consensus problem has been also extensively studied 
	as it arises in a large variety of applications in automatic control or for the  modeling of natural phenomenas~\cite{VCBCS95}.
Averaging algorithms, in which   every agent repeatedly takes a weighted average of its own value and values 
	received from its neighbors, are the natural and widely studied algorithms for this problem.
One central result by Cao, Morse, and Anderson~\cite{CMA08a} is that every \emph{safe} averaging algorithm	
	-- that is, an averaging algorithm where positive weights are uniformly bounded away from zero --
	solves this problem with a continually rooted, time-varying topology, even if the set of roots and links 
	change arbitrarily. 

\noindent {\bf Contribution.}
The primary goal of this paper is the design of stabilizing  consensus algorithms for synchronous, fault free 
	networks of identical and anonymous agents connected by a time-varying topology
	without any guarantee of strong connectivity. 
It should be noted that while stabilizing  consensus is trivially solved by a gossip algorithm when the  time-varying 
	topology is eventually strongly connected, in the sense that for every pair of agents $u$ and $v$
	there always exists a  time consistent path from $u$ to $v$, there is no obvious solution
	in the case some nodes cannot receive information from part of the network.
In the absence of such a connectivity property,  synchronous starts cannot be simulated~\cite{CBM18}, 
	and hence tolerating asynchronous starts makes the problem even more challenging.
	
We start by introducing the notion of  \emph{kernel}  that models the set of \emph{root agents} able at any time 
	to send messages, possibly indirectly, to all other agents.
If  this can be achieved in at most~$T$ communication steps, then the topology is said to be  
	\emph{rooted with delay}~$T$.
 A time-varying topology with a non-empty kernel is  thus rooted with finite but  {\it a priori} unbounded  delays.
We first prove that  stabilizing  consensus is not solvable in the case of an empty kernel.
Then we show that in the case of a time-varying topology that is rooted with  bounded delay, 
	the stabilizing  consensus problem is solvable, even if the bound is unknown. 
	
For that, we introduce the  \emph{MinMax} update rules for the output variables $y_u$, 
 	and  then provide a distributed  implementation of these update rules that is efficient,  both
	in terms of message size and storage requirements.
The resulting distributed algorithms, called \emph{MinMax} algorithms, require no leader, no agent identifiers, 	
	and assume no  global knowledge of the network structure or size.
Moreover, they tolerate that agents join the system asynchronously.
We define the subclass of {\em safe} MinMax algorithms, and show that any such algorithm achieves 
	stabilizing  consensus if the topology is rooted with bounded delay.
As a corollary, we get that stabilizing consensus is solvable in any asynchronous and completely connected network
	 and a minority of faulty agents that crash or commit send omissions.
Finally,   we show that using  safe MinMax algorithms,  stabilizing  consensus is not solvable under the sole assumption
	of a non-empty kernel, i.e., the topology is rooted with finite but unbounded delays.

Another contribution of this work is the introduction of new notions that
	capture global properties of  {dynamic graphs}, like the {\em kernel},
	  the {\em integral}, the {\em limit  superior} of a dynamic graph, which
	  we believe to be useful for investigating other distributed problems
	in networked systems with time-varying topologies.
	
\section{ Preliminaries}\label{sec:model}
 
\subsection{The computational model}

We consider a networked system with a {\em fixed} set $V$ of $n$ agents.
Our algorithms assume anonymous networks in which agents have no identifiers
	and do not know the network size  $n$. 

We assume a round-based computational model in the spirit of the Heard-Of model~\cite{CBS09}. 
Point-to-point communications are organized into \emph{synchronized rounds}: each agent can send messages 
	to all agents and can receive messages sent  by some of the agents.
Rounds are communication closed in the sense that no agent receives messages in round~$t$ that are sent 
	in a round different from~$t$.
The collection of \emph{possible} communications (which agents can communicate to which agents) at each round $t$
	is modelled by a directed graph (digraph) with one node per agent.
The digraph at round~$t$ is  denoted $\dG(t)=(V,E_t)$, and is called the \emph{communication graph et round}~$t$. 
When dealing with just graph notions, we will use the term node rather than the one of  agent for an element of~$V$.
We assume a self-loop at each node in all these digraphs  since every agent can communicate with 
	itself instantaneously.	
The sequence of digraphs~$\dG=\left (\dG(t) \right )_{t\in\IN}$ is called a {\em dynamic graph}~\cite{CFQS11:TVG}. 
A \emph{network model} is any non-empty set of dynamic graphs.

In every {\em run} of an algorithm, each agent~$u$ is initially {\em passive}: 
	it  neither sends nor receives messages, and do not change its state. 
Then  it either becomes {\em active}  at the beginning of some round~$s_u\geq 1$, 
	or remains passive forever -- in which case we let $s_u = \infty$.
A run is {\em active} if all agents are eventually active. 	
			
At the beginning of its starting round~$s_u$,  the agent~$u$ sets up its local variables 
	and starts executing its program.
In round~$t \geq s_u$, $u$ sends messages to all agents, receives messages from all its incoming 
	neighbors in  the digraph $\dG(t) $ that are active, and finally goes to its next state applying
	a deterministic transition rule.
Then the agent~$u$ proceeds to round $t+1$. 
The number of the current round is  not assumed to be provided to the agents. 

The value of a local variable $x_u$ of $u$ at the \emph{end} of round~$t \geq s_u$  is denoted by $x_u(t)$. 
By convention, the value of $x_u(t)$ for $t<s_u$ 
	is defined as  the initial value of $x_u$. 
	
Since each agent is deterministic, a run  is entirely determined 
	by the initial state of the network,  the dynamic graph $\dG$,
	and  the collection  of the starting rounds.
For each run,    $\dGa(t) = (V, E^{{\mathrm{a}}}_t) $ denotes the digraph where $E^{{\mathrm{a}}}_t\subseteq E_t$
	is the set of edges  that are either self-loops\footnote{%
	To allow for simple notation, there are self-loops at all nodes of $\dGa(t)$, including those corresponding to 
	the passive agents at round~$t$.}
	or connecting  two  agents  that are active in round $t$.
The sets of $u$'s incoming neighbors  (in-neighbors) in the digraphs $\dG(t)$ and $\dGa(t)$
		are denoted	by $\In_u(t)$ and $\Ina_u(t)$, respectively.

\subsection{ Limits and integrals of dynamic graphs}

Let us first  recall that the {\em product} of two digraphs $G_1=(V,E_1)$ and $G_2=(V,E_2)$, denoted $G_1 \circ G_2 $, 
	is the digraph  with the set of nodes $V$ and with an edge $(u,v)$ if there exists~$w\in V$ such that $(u,w)\in E_1$ 
	and $ (w,v) \in E_2$.
For any dynamic graph $\dG$ and any integers $t' > t \geq 1$, we let $ \dG(t:t') = \dG(t) \circ  \dots \circ \dG(t')$.
By convention, $\dG(t:t)=\dG(t) $, and  when $0\leq t'<t$, $ \dG(t:t') $ is the digraph with only a self-loop at each node.

Given any dynamic graph $\dG$ and any scheduling of starts,  the sets of $u$'s incoming neighbors 
	(or in-neighbors for short) in $\dG(t:t')$ and in $\dGa(t:t')$ are denoted by $\In_u(t  :t' )$ and $\Ina_u(t  :t' )$, respectively,
	and simply by $\In_u(t)$ and $\Ina_u (t)$ when $t' = t$.
Because of the self-loops,  all these sets  contain the node~$u$.
If  $t'<t$, then $\In_u(t  :t' ) = \Ina_u(t  :t' ) = \{ u \}$. 

If  $t  \leq t'$, then a $v {\sim } u$ \emph{path in the interval}  $[t,t' ]$ is any finite sequence 
	$ w_0 = v,w_1,\dots ,w_m = u$ with  $m = t' - t+1$ and 
	$(w_k,w_{k+1})$ is an edge of $\dG(t + k)$ for each $k = 0,\dots , m - 1$. 
Hence there exists a $v {\sim } u$ path in the interval $ [t ,t'] $ if and only if $(v,u)$ is an edge of $\dG(t:t')$, 
	or equivalently $v \in  \In_u(t  : t' )$.
	
By extension over the infinite interval $[t, \infty)$, we define the digraphs
	$$ \dG( t :\infty )=  \left ( V,     \cup_{t'\geq t} E(\dG (t:t')) \right), ~~
		 \dGa( t :\infty )=  \left ( V,     \cup_{t'\geq t} E(\dGa (t:t')) \right), $$
	and denote by $\In_u(t:\infty) $ and  $ \Ina_u(t:\infty) $ the sets of $u$'s in-neighbors in these two 
	digraphs, i.e., 
	$$\In_u(t:\infty)=\cup_{t'\geq t}\In_u(t:t'), ~~\Ina_u(t:\infty)=\cup_{t'\geq t}\Ina_u(t:t')    .$$
The dynamic graph $\idG$, defined by $\idG(t) = \dG  ( t :\infty )  $, is called the \emph{integral} of $\dG$.

The \emph{limit superior} of $\dG$,  denoted by $\dG(\infty)$, is defined as the digraph 
	$  \dG(\infty)=(V,E_\infty) $,
	where $E_\infty$ is the set of edges that appear in an infinite number of digraphs $\dG(t)$, namely,
	$$ E_\infty = \{ (u,v) \in V\times V : \forall t, \exists t'\geq t, \ (u,v) \in E(\dG(t')) \}  .$$
In particular, the digraph $\idG(\infty)$  is the limit superior of $\idG$.

\begin{prop}\label{prop:idGstab}
If $\dG$ is a dynamic graph with a permanent self-loop at each node, then
	 $\idG$ eventually stabilizes to $\idG(\infty)$, i.e., there is a positive integer~$s$ such that
	$$ \forall t \geq s, \  \ \idG(t) = 	\idG(\infty)    . $$
\end{prop}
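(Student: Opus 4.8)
The plan is to show that the integral $\idG$ forms a \emph{monotonically decreasing} sequence of digraphs with respect to edge inclusion, then to invoke the finiteness of $V$ to conclude that such a sequence must be eventually constant, and finally to identify the stabilized digraph with the limit superior $\idG(\infty)$. The heart of the argument is the monotonicity claim: for every node $u$ and every integer $t \geq 1$,
$$ \In_u(t+1:\infty) \subseteq \In_u(t:\infty) . $$
The point is that any time-respecting path certifying $v \in \In_u(t+1:\infty)$ can be converted into one starting one round earlier, simply by prepending a step along the self-loop at $v$ in $\dG(t)$. Concretely, if $v = w_0, w_1, \dots, w_m = u$ is a $v {\sim} u$ path in some interval $[t+1, t']$, then $v, w_0, w_1, \dots, w_m$ is a $v {\sim} u$ path in $[t, t']$, using the self-loop edge $(v,v) \in E(\dG(t))$ for its first step. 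Since $v \in \In_u(t:\infty)$ is equivalent to $(v,u) \in E(\idG(t))$, this yields $E(\idG(t+1)) \subseteq E(\idG(t))$, so the edge sets $E(\idG(t))$ form a nonincreasing chain of subsets of $V \times V$.

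Given this, the remaining steps are routine bookkeeping. Because $V$ is finite there are only finitely many candidate edge sets, so a nonincreasing chain must be eventually constant: there is a positive integer $s$ with $E(\idG(t)) = E(\idG(s))$ for all $t \geq s$, which is exactly the asserted stabilization $\idG(t) = \idG(s)$. To match this with $\idG(\infty)$, recall that the limit superior retains precisely those edges occurring in infinitely many of the digraphs $\idG(t)$. For a nonincreasing chain an edge occurs infinitely often if and only if it occurs for every $t \geq s$, that is, if and only if it lies in $E(\idG(s))$; hence $\idG(\infty) = \idG(s)$, and therefore $\idG(t) = \idG(\infty)$ for all $t \geq s$, as required.

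I expect the genuine crux to be the monotonicity inclusion $\In_u(t+1:\infty) \subseteq \In_u(t:\infty)$, since it is exactly here that the hypothesis of a permanent self-loop at each node is essential. Without self-loops, reachability over an interval starting one round later need not be contained in reachability starting earlier, so the sequence $\idG(t)$ would in general be neither monotone nor convergent, and the proposition would fail. The finiteness argument and the limit-superior identification, by contrast, are entirely standard once the chain is known to be nonincreasing.
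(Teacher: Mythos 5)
Your proof is correct and follows essentially the same route as the paper's: establish that the self-loops make $\bigl(E(\idG(t))\bigr)_{t\geq 1}$ a nonincreasing chain, conclude stabilization from finiteness of $V$, and identify the stabilized digraph with the limit superior. You simply spell out the path-prepending argument behind the monotonicity step, which the paper states in one line.
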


\begin{proof}
Because of the self-loops, every edge of $\idG(t+1)$ is an edge of $\idG(t)$. 
Hence the dynamic graph $\idG$ eventually stabilizes to some digraph $\idG(s)$, i.e., there is a positive integer~$s$ such that
	$$ \forall t \geq s, \ \idG(t) = 	\idG(s)    . $$
Hence all edges in $\idG(s)$ are edges of~$\idG(\infty) $.

Conversely, by definition of the limit superior, any edge of~$\idG(\infty) $ appears in some digraph $\idG(t)$ with 
	$t \geq  s$, and since $\idG(t)=\idG(s)$,  is also an edge of $\idG(s)$.
\end{proof}

Let us recall that a digraph $G$ is transitively closed if any edge of $G \circ G$ is an edge of $G$. 
In the case $G$ has a self-loop at each node, this is equivalent to $G \circ G= G$. 
The transitive closure of $G$, denoted by $  {G}^{ +} $, is the minimal transitively closed digraph
	that contains all edges of $G$.
	
\begin{thm}\label{thm:idg,dg}
If $\dG$ is a dynamic graph with permanently a self-loop at each node, then  
	$\idG(\infty) $  is the transitive closure of $\dG(\infty) $, namely,
	$$ \idG(\infty) =   [\dG(\infty) ]^{ +}    . $$
\end{thm}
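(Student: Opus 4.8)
The plan is to prove the two inclusions $[\dG(\infty)]^{+}\subseteq\idG(\infty)$ and $\idG(\infty)\subseteq[\dG(\infty)]^{+}$ separately, relying throughout on Proposition~\ref{prop:idGstab}. Since the permanent self-loops make $\idG$ monotonically decreasing (every edge of $\idG(t+1)$ is an edge of $\idG(t)$), an edge lies in $\idG(\infty)$ if and only if it lies in $\idG(t)$ for \emph{every}~$t$; equivalently, for every $t$ there is some $t'\geq t$ carrying the corresponding path in the interval $[t,t']$. I would record this reformulation first, as both inclusions use it.

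For $[\dG(\infty)]^{+}\subseteq\idG(\infty)$, I would first check that $\dG(\infty)\subseteq\idG(\infty)$: if $(u,v)\in E_\infty$, then $(u,v)$ occurs in $\dG(t')$ for arbitrarily large $t'$, and padding with the self-loop at~$u$ over the rounds $t,\dots,t'-1$ turns this single edge into a $u{\sim}v$ path in $[t,t']$, so $(u,v)\in\idG(t)$ for all~$t$. I would then show that $\idG(\infty)$ is itself transitively closed: given $(u,w),(w,v)\in\idG(\infty)$ and any~$t$, I concatenate a $u{\sim}w$ path in some $[t,t']$ with a $w{\sim}v$ path in the disjoint interval $[t'+1,t'']$ to obtain a $u{\sim}v$ path in $[t,t'']$. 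Since $\idG(\infty)$ is transitively closed and contains $\dG(\infty)$, minimality of the transitive closure yields the inclusion.

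The converse inclusion is the crux, and the key observation is a finiteness argument. Because $V$ is finite, there are only finitely many possible edges, and each edge outside $E_\infty$ occurs, by definition of the limit superior, in only finitely many of the digraphs $\dG(t)$; hence there is a round~$T$ beyond which no such transient edge ever appears, i.e., $E(\dG(t))\subseteq E_\infty$ for all $t\geq T$. Now take $(u,v)\in\idG(\infty)$. By the reformulation above, $(u,v)\in\idG(T)$, so there is a $u{\sim}v$ path $w_0=u,\dots,w_m=v$ in some interval $[T,t']$; every edge $(w_k,w_{k+1})$ of this path belongs to $\dG(T+k)$ with $T+k\geq T$, hence lies in $E_\infty$. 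This exhibits a directed walk from $u$ to $v$ using only edges of $\dG(\infty)$, so $(u,v)\in[\dG(\infty)]^{+}$.

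The main obstacle is isolating the round~$T$ in the last step: one must argue that the transient edges, although each only finitely present, are \emph{collectively} bounded in time, which is exactly where the finiteness of $V$ (equivalently of $V\times V$) is indispensable — the statement would fail for an infinite node set, since then transient edges could appear arbitrarily late. Everything else is routine bookkeeping with the path/product correspondence and the monotonicity supplied by Proposition~\ref{prop:idGstab}.
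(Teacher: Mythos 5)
Your proof is correct and follows essentially the same route as the paper's: both establish $[\dG(\infty)]^{+}\subseteq\idG(\infty)$ by showing $\idG(\infty)$ is transitively closed (concatenating paths over consecutive intervals) and contains $\dG(\infty)$, and both prove the converse by using the finiteness of $V\times V$ to find a round past which every edge of $\dG(t)$ already lies in $E_\infty$. The only cosmetic difference is that you phrase membership in $\idG(\infty)$ via the monotonicity reformulation ("in $\idG(t)$ for every $t$") where the paper invokes the stabilization index $s$ from Proposition~\ref{prop:idGstab}; these are equivalent.
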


\begin{proof}
First we prove that  $\idG(\infty) $ is transitively closed. 
Since every edge of  $ \dG(\infty) $ is also an edge of  $\idG(\infty) $, this will show that 
	$ [\dG(\infty) ]^{+}    \subseteq  \idG(\infty) $.
For that, let $s$ be the index from which $\idG$ stabilizes (cf. Proposition~\ref{prop:idGstab}), 
	and let $(u, v)$ and $(v, w)$ be two edges of  $\idG(\infty) $. 
Since $ \idG(\infty) = \idG(s)$, there exists an index $t \geq  s $ such that $ (u, v) $ is an edge in $\dG(s : t)$. 
Since $ \idG(\infty)  =  \idG(t+1) $, there exists an index $ t'  >  t$ such that $ (v, w) $ is an edge in $\dG(t+1 : t')$. 
It follows that $(u, w) $ is an edge in $\dG(s : t' )$, and hence $(u, w) $ is an edge in $ \idG(s) = \idG(\infty) $.

We now prove the reverse inclusion; let $(u, v) $ be an edge of $ \idG(\infty) $. 
Since there are finitely many edges that appear finitely many times in $\dG$, 
	there is an index $ r $ such that for all $ t  \geq  r$, any edge in $\dG(t) $ is an edge in $ \dG(\infty) $. 
Let $ t = \max (s, r )$. 
By Proposition~\ref{prop:idGstab}, $(u, v)$  is an edge of $ \idG(t)  = \idG(\infty)$, i.e., there exists an index 
	$ t'  \geq  t $ such that $ (u,v) $ is an edge of $ \dG(t : t'  )$. 
In other words, there is a $u \sim v $ path in the interval $ [t,t' ] $; 
	let  $ w_0 = u , w_1, w_2, \dots , w_{t' - t +1} = v  $ be such a path. 
Since $ t \geq   r $, each edge in this  path is an edge in $ \dG(\infty) $, which shows that $ (u, v) $ is an edge 
	of the transitive closure of $ \dG(\infty)  $,  namely $  [\dG(\infty) ]^{+}  $. 
 \end{proof}

\subsection{Roots, central roots, and kernels}

A  node~$u$ is  {\em  a root of } the digraph $G =(V,E) $  if for every node  $v \in V$, 
	there is  a path from $u$ to $v$ in $G$, and ~$G$ is said to be \emph{rooted} if it has at least one root.
Node~$u$ is  {\em  a central root of}~$G$ if for every node  $v \in V$,  $(u,v)$ is an edge of~$G$.
The set of~$G$'s roots and the set of~$G$'s central roots  are denoted by $\Roo(G)$ and $\CRoo(G)$, respectively.

The {\em  kernel} of a dynamic graph $\dG$, denoted by $\Ker(\dG)$, is  defined as
	$$ \Ker(\dG) =   \left \{ u \in V \mid \forall t\geq 1, \forall v \in V,  \exists t'\geq t :  (u,v) \in E(\dG(t:t')) \right\} $$
	 or equivalently,                     
	 $$ \Ker(\dG) =    \cap_{t\geq 1}  \CRoo(\idG(t)) . $$

\begin{prop}\label{pro:ker}
If $\dG$ is a dynamic graph with permanently a self-loop at each node, then 
	$$  \Ker(\dG) =  \CRoo \left( \idG(\infty) \right) = \Roo\left(\dG(\infty)\right)  .$$
\end{prop}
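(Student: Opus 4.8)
The plan is to establish the two equalities separately, chaining the results already proved for the integral $\idG$ and its limit superior. For the first equality $\Ker(\dG) = \CRoo(\idG(\infty))$, I would start from the definition $\Ker(\dG) = \cap_{t \geq 1} \CRoo(\idG(t))$ and observe that the permanent self-loop at each node makes the family $\bigl(\CRoo(\idG(t))\bigr)_{t\geq 1}$ nested: exactly as in the proof of Proposition~\ref{prop:idGstab}, every edge of $\idG(t+1)$ is an edge of $\idG(t)$, so a central root of $\idG(t+1)$ is also a central root of $\idG(t)$, giving $\CRoo(\idG(t+1)) \subseteq \CRoo(\idG(t))$. Then I would invoke Proposition~\ref{prop:idGstab} to fix an index $s$ with $\idG(t) = \idG(\infty)$ for all $t \geq s$. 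Since the sequence of central-root sets is decreasing, the intersection over all $t \geq 1$ coincides with the intersection over $t \geq s$, and every term of the latter equals $\CRoo(\idG(\infty))$; hence $\Ker(\dG) = \CRoo(\idG(\infty))$.

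For the second equality the key step is a purely static observation: for any digraph $G$ with transitive closure $G^+$, the central roots of $G^+$ are exactly the roots of $G$. Indeed, $u \in \CRoo(G^+)$ means that $(u,v)$ is an edge of $G^+$ for every $v$, which by definition of the transitive closure is equivalent to the existence of a path from $u$ to $v$ in $G$ for every $v$, i.e. $u \in \Roo(G)$. Applying this with $G = \dG(\infty)$ and using Theorem~\ref{thm:idg,dg}, which identifies $\idG(\infty)$ with $[\dG(\infty)]^{+}$, would then yield $\CRoo(\idG(\infty)) = \CRoo\bigl([\dG(\infty)]^{+}\bigr) = \Roo(\dG(\infty))$, closing the chain of equalities.

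I expect the individual steps to be short; the point that most needs care is the hypothesis required to apply Theorem~\ref{thm:idg,dg}. That theorem assumes $\dG$ has a permanent self-loop at each node, and I would note explicitly that this also forces $\dG(\infty)$ to carry a self-loop at each node, since each self-loop appears in infinitely many rounds and therefore survives in the limit superior. This is what makes the equivalence ``$(u,v)$ is an edge of $G^{+}$ if and only if there is a path from $u$ to $v$ in $G$'' legitimate for $G=\dG(\infty)$. The only other mildly delicate point is the interchange of the infinite intersection with the stabilization index in the first part, which the monotonicity of $\CRoo(\idG(t))$ resolves cleanly, so I do not anticipate a genuine obstacle beyond keeping these self-loop hypotheses in view.
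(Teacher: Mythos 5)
Your proposal is correct and follows essentially the same route as the paper: the first equality via the monotonicity of $\CRoo(\idG(t))$ under the self-loop hypothesis together with Proposition~\ref{prop:idGstab}, and the second via Theorem~\ref{thm:idg,dg} identifying $\idG(\infty)$ with $[\dG(\infty)]^{+}$. Your explicit remark that $\dG(\infty)$ inherits the self-loops (so that roots and central roots of the transitive closure coincide) is a point the paper leaves implicit, but it is a refinement of the same argument rather than a different one.
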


\begin{proof}
Because of the self-loops, $\CRoo\left ( \idG(t+1) \right) \subseteq \CRoo\left ( \idG(t) \right) $,
	which by Proposition~\ref{prop:idGstab} implies that  
	$$ \Ker(\dG) = \cap_{t\geq 1}  \CRoo(\idG(t)) = \CRoo \left( \idG(\infty) \right)   . $$
By Theorem~\ref{thm:idg,dg}, the digraph $\idG(\infty) $  is the transitive closure of $\dG(\infty) $, 
	and so 
	$$ \Roo(\dG(\infty)) = \CRoo(\idG(\infty))  $$
	which completes the proof.
\end{proof}

The dynamic graph  $\dG$ is said to be \emph{\isc} if  $\Ker(\dG) =V$, 
	or equivalently, by Proposition~\ref{pro:ker}, if  $\dG(\infty)$ is strongly connected.

\subsection{Bounded delay rootedness}

A dynamic graph~$\dG$ is said to be \emph{permanently rooted} if  all the digraphs $\dG(t)$ are rooted. 
This notion naturally extends as follows: 

\begin{defi}
A dynamic graph $\dG$ is \emph{rooted with delay} $T $ if 
	for every positive integer $t $, $\dG(t : t+ T - 1)$ is rooted. 
$\dG$ is {\em  rooted with a bounded delay} if it is 
	rooted with delay~$T$ for some fixed positive integer~$T$.
\end{defi}

Any dynamic graph that is rooted with a  bounded  delay 
	has a non-empty kernel, i.e., there are nodes 	that are central roots of all the digraphs $\idG(t) $.
Proposition~\ref{prop:boundpath} below shows that these nodes are actually central roots over bounded length intervals.
	
\begin{prop}\label{prop:boundpath}
If $\dG$ is  a dynamic graph  that is  rooted with delay $T$, then there exists a positive integer~$s$
	such that 
	$$\forall t\geq s,\forall u\in V,~\In_u \big( t : t + T(n- | \Ker( \dG) | ) \big) \cap \Ker( \dG) \neq \emptyset .$$ 
\end{prop}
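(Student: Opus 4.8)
My plan is to exploit the stabilization of $\dG$ together with the identity $\Ker(\dG) = \Roo(\dG(\infty))$ supplied by Proposition~\ref{pro:ker}. Write $K = \Ker(\dG)$ and $k = |K|$; by the observation just before the statement, bounded-delay rootedness forces $K \neq \emptyset$, so $1 \leq k \leq n$. Exactly as in the proof of Theorem~\ref{thm:idg,dg}, only finitely many edges occur finitely often, so I can fix an index $r$ such that every edge occurring in some $\dG(\tau)$ with $\tau \geq r$ already belongs to $\dG(\infty)$; I will take $s = r$.

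Fix $t \geq s$. First I would cut the interval $[t, t + T(n-k) - 1]$ into $n-k$ consecutive blocks of length $T$, writing $B_j = \dG\big(t + (j-1)T : t + jT - 1\big)$, so that $B_1 \circ \cdots \circ B_{n-k} = \dG(t : t + T(n-k) - 1)$. Delay-$T$ rootedness makes each $B_j$ rooted, and the permanent self-loops make each $B_j$ carry a self-loop at every node.

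The heart of the argument -- and the step I expect to be the main obstacle -- is to show that for $t \geq s$ every block is rooted at a node of $K$. Let $\rho$ be a root of some $B_j$. Every edge of $B_j$ is a product of edges of round-digraphs $\dG(\tau)$ with $\tau \geq t \geq r$, so each such edge already lies in $\dG(\infty)$; hence every edge of $B_j$ is realized by a walk in $\dG(\infty)$. Composing along a $\rho \sim v$ path in $B_j$ then shows that $v$ is reachable from $\rho$ in $\dG(\infty)$ for every $v \in V$, i.e. $\rho \in \Roo(\dG(\infty)) = K$ by Proposition~\ref{pro:ker}. This is precisely where the threshold $r$ and the characterization of the kernel are indispensable: without the bound on $t$ a block root could lie outside the set currently reached from $K$, which would break the propagation below.

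Finally I would propagate reachability block by block. Set $R_0 = K$ and let $R_j$ be the set of nodes reachable from $K$ through $B_1 \circ \cdots \circ B_j$; the self-loops give $R_{j-1} \subseteq R_j$, and $R_j$ is the set of out-neighbours of $R_{j-1}$ in $B_j$. The claim is that $R_{j-1} \subsetneq R_j$ as long as $R_{j-1} \neq V$: the root $\rho$ of $B_j$ now lies in $K \subseteq R_{j-1}$, so choosing any $u \notin R_{j-1}$ and a $\rho \sim u$ path in $B_j$, the first edge of that path leaving $R_{j-1}$ produces a node of $V \setminus R_{j-1}$ that is an out-neighbour of $R_{j-1}$, hence lies in $R_j$. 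Since $|R_0| = k$ and each of the $n-k$ blocks strictly enlarges $R_j$ until it equals $V$, I obtain $R_{n-k} = V$. Thus every $u$ is reached from some node of $K$ over $[t, t + T(n-k) - 1]$, and a fortiori over $[t, t+T(n-k)]$ by monotonicity of $\In_u$ in its right endpoint, which is the assertion $\In_u\big(t : t + T(n-k)\big) \cap K \neq \emptyset$.
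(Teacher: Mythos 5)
Your proof is correct and follows essentially the same route as the paper's: you fix a threshold $s$ after which every round-edge lies in $\dG(\infty)$ so that (via Proposition~\ref{pro:ker}) each rooted block has its root in $\Ker(\dG)$, and you then grow the set reached from the kernel by at least one node per block. The only difference is cosmetic: the paper first reduces to $T=1$ via the dynamic graph $\dG_T$ and argues round by round, whereas you run the identical counting argument directly on blocks of length $T$, which if anything handles arbitrary starting rounds $t$ a bit more cleanly.
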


\begin{proof}
For simplicity, we assume that $T= 1$; the general case can be  easily reduced to
	the case $T=1$ by considering the dynamic graph $\dG_T$ defined by $\dG_T (t) = \dG \big( (t-1)T + 1 : tT \big) $
	that is  rooted with delay one.

Let $s$ be a positive integer such that for all $t \geq s$, every edge of~$\dG(t)$  	is also an edge of the digraph~$ \dG( \infty) $,
	i.e.,  $E_t\subseteq E_\infty$.
 Then we have that
	\begin{equation}\label{eq:blush}
	 \forall t\geq s, \  \  \Roo(\dG( t ))  \subseteq \Ker(\dG )   .
	 \end{equation}
	
For any non-negative integer~$i$, let us now introduce the set $U_i$ of nodes that are outgoing neighbors 
	in the digraph~$\dG(t : t+i)$ of some of the nodes in~$\Ker (\dG)$ .	
Because of the self-loops, we have that $ \Ker(\dG) \subseteq U_{0}$ and
	$U_{i} \subseteq U_{i + 1}$.
We now show that either $ U_{i } = V $ or $U_{i} \subsetneq U_{i + 1}$.

For that, assume that $ U_{i } \neq V $.
Let $u \notin U_i$, and let $v$ be a root of the digraph $\dG( t+i +1)$;	
	hence there exists a path from~$v$ to $u$ in  $\dG( t+i +1)$.
From (\ref{eq:blush}) and the above inclusions, we derive that 
	$$ v \in \Roo(\dG( t + i + 1 ))  \subseteq \Ker(\dG ) \subseteq  U_0 \subseteq U_i  .$$
Thereby, there are two consecutive nodes~$x$ and~$y$ in the $v {\sim } u$ path such that
	$x \in U_i$ and $y \notin U_i$.
By construction,~$y$ is an outgoing neighbor of~$v \in \Ker (\dG)$ in the digraph~$\dG(t : t +i +1)$,
	and thus $y \in U_{i + 1}$.
In conclusion, $y \in U_{i+1} \setminus U_i$, which shows that $U_{i} \neq U_{i + 1}$.

It follows that $ | U_i | \geq \min ( n ,  k + i) $ where $k = | \Ker( \dG) |$, and hence
	$ U_{n-k} = V  $.
Thus for every node $u \in V$, it holds that $\In_u \big( t : t + T(n-k) \big) \cap \Ker( \dG) \neq \emptyset $,
		as required.
\end{proof}

\section{The  Stabilizing Consensus Problem}

Let ${\cal V}$ be a totally ordered set and let $A$ be an algorithm in which each agent $u$ has an input value 
	$ \mu_u \in {\cal V}$ and 
	an output variable $y_u $ initialized to $\mu_u$. 
The algorithm $A$   \emph{achieves stabilizing consensus} in an
	active run  with the initial values $(\mu_u)_{u\in V}$ if the following properties hold:
 
\begin{description}
	\item[Validity.] At every round $t$ and  for each agent~$u$, there exists some agent~$v$ such that $y_u(t)=\mu_v$.
	
	\item[Eventual agreement.] There exists some round~$ s $ such that
	$$ \forall t\geq s, \  \forall u,v \in V, \  y_v(t) = y_u(s)   .$$
	
\end{description}
The common limit value of the variables~$y_u$ is called the \emph{consensus value}.
The algorithm $A$ is said  \emph{to solve the  stabilizing consensus problem in a network model}  ${\cal G}$ 
	if it achieves  stabilizing consensus in each of its active runs with a dynamic graph in ${\cal G}$.

\section{A Necessary Condition for Stabilizing Consensus}

Next we show a necessary condition 
for the stabilizing consensus problem to be solvable.

\begin{thm}\label{thm:impossibility1}
There is no algorithm that solves  stabilizing consensus in a network model containing a 
	dynamic graph with an empty kernel.
\end{thm}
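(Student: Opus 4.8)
The plan is to argue by contradiction, using the asynchronous starts to neutralize the finitely many ``transient'' edges and thereby split the network into two communication groups that receive no information from outside and can be driven to different values. Throughout I assume, as usual, that ${\cal V}$ contains at least two distinct values (otherwise the problem is trivial), and I fix $\alpha\neq\beta$ in ${\cal V}$.

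First I would translate the hypothesis into a structural fact about $\dG(\infty)$. By Proposition~\ref{pro:ker}, $\Ker(\dG)=\Roo(\dG(\infty))$, so an empty kernel means $\dG(\infty)$ has no root. Passing to the condensation (the DAG of strongly connected components), a digraph is rooted exactly when its condensation has a single source component; hence $\dG(\infty)$ must possess at least two source components. Choosing $a,b$ in two distinct such components and letting $C_1,C_2$ be their respective sets of ancestors in $\dG(\infty)$ (the nodes $v$ with $(v,a)$, resp.\ $(v,b)$, an edge of $\idG(\infty)$), I obtain two \emph{disjoint}, nonempty node sets: disjoint because no node can reach both $a$ and $b$, and each \emph{closed} in the sense that $\dG(\infty)$ contains no edge entering $C_1$ (resp.\ $C_2$) from the outside.

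Next I would pin down a time after which $\dG$ behaves like $\dG(\infty)$: only finitely many edges fail to lie in $E_\infty$ and each occurs only finitely often, so there is an $s_0$ with $E_t\subseteq E_\infty$ for all $t\geq s_0$ (the observation already used in the proof of Proposition~\ref{prop:boundpath}). Consequently, from round $s_0$ on, no edge enters $C_1$ or $C_2$ from outside. The heart of the argument, and the step I expect to be the delicate one, is to kill the residual cross-influence carried by the transient edges \emph{before} $s_0$; this is exactly where asynchronous starts are essential, since with synchronous starts the statement is false (e.g.\ two nodes that exchange their values once and then disconnect already solve consensus). I would restrict attention to the active runs in which every agent becomes active precisely at round $s_0+1$: an agent of $C_1$ is then passive, hence neither sends nor receives, up to round $s_0$, and from round $s_0+1$ on its active in-neighbours all lie in $C_1$ by closedness and the choice of $s_0$. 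A straightforward induction on the round number then shows that the whole trajectory, and in particular the output sequence, of each agent of $C_1$ is a function of the inputs of $C_1$ alone, and symmetrically for $C_2$.

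Finally I would derive the contradiction. Suppose an algorithm $A$ solves stabilizing consensus on a model containing $\dG$. Applying the isolation step to the (all-starting-at-$s_0+1$) run in which every input equals $\alpha$, Validity forces the common value $\alpha$, so the $C_1$-trajectory evaluated on the all-$\alpha$ inputs stabilizes to $\alpha$; symmetrically, the $C_2$-trajectory on the all-$\beta$ inputs stabilizes to $\beta$. Now consider the single run, again with all agents starting at round $s_0+1$, whose inputs are $\alpha$ on $C_1$ and $\beta$ on $C_2$ (arbitrary elsewhere): since the output of $C_1$ depends only on $C_1$'s inputs it stabilizes to $\alpha$, while that of $C_2$ stabilizes to $\beta$. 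As $\alpha\neq\beta$, this contradicts Eventual Agreement, which completes the proof.
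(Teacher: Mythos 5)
Your proof is correct and follows essentially the same route as the paper's: pass to the condensation of $\dG(\infty)$ to get two source components, delay all starts past the last transient edge so each source component is informationally isolated, and use Validity to force the two components to different values. The only difference is that you make explicit the indistinguishability step (comparing against the all-$\alpha$ and all-$\beta$ runs) that the paper's appeal to Validity leaves implicit, which is a worthwhile clarification but not a different argument.
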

\begin{proof}
Let $\dG$ be any dynamic graph with an empty kernel, and let $s$ be an index such that 
	for every $t \geq s $, we have  $E_t \subseteq E_\infty$.
Let us consider the acyclic digraph formed with the strongly connected components of $\dG(\infty)$, 
	called the condensation graph of $\dG(\infty)$,  and let us recall that the condensation graph of a 
	non-rooted digraph contains at least two source nodes, i.e.,  two nodes with no incoming edges  
	(see e.g. \cite{EvenBook}). 
From Propositions~\ref{pro:ker}, we derive that the condensation graph 
	of $\dG(\infty)$ has at least two source nodes $U_0$ and $U_1$.
Hence from round~$s$, none of the agents corresponding to the nodes in~$U_0$ (resp. $U_1$) are reachable 
	from the agents corresponding to the nodes in~$U_1$ (resp. $U_0$).

For the sake of contradiction, assume that there exists an algorithm $A$ that achieves   stabilizing consensus 
	in all the runs with the dynamic graph~$\dG$.
Consider now any run of the algorithm~$A$ in which all agents  start  at round $s$, 
	and all agents of the strongly connected components $U_0$ and $U_1$ have the input values 0
	and 1, respectively.
Because of the validity property, all the agents in $U_0$ must set their output values permanently to 0. 
Similarly, all the agents in $U_1$ must set their outputs permanently to 1,
	which shows that the eventual agreement property is violated in this run.
\end{proof}

\section{MinMax Algorithms}
 
In this section, we define the class of \emph{MinMax} algorithms by the type of update rules for
	the variables $y_u$.
The way MinMax algorithms can be implemented in our computing model will be addressed in Section~\ref{sec:implementation}.

We  start with an informal description of these algorithms.
As a first step, consider the \emph{Min algorithm}, in which
 	 each agent $u$ has an output variable $x_u$  which is repeatedly set to the minimum input value $u$ has heard of.  
It is easy to see that on  dynamic graphs  that  are \isc,  the values of all  $x_u$ variables eventually  stabilize on the minimum input value.

	
When the Min algorithm is applied on an arbitrary dynamic graph,  $x_u$ eventually stabilizes on the minimum input value received by $u$, to be denoted by  $m^*_u$: 
	$$  m^*_u\eqdef\min_{v\in \Ina_u(1:\infty)}\big( \mu_v~\big)  . $$ 
Hence there is an integer $\theta$    such that for  every round~$ t \geq \theta $ and every agent~$u$, 
 	it holds that $  x_u(t) = m^*_u $.
As  shown below in  Lemma~\ref{lem:kermin}, if $u$ is in $\Ker(\dG)$, then  $m^*_u=m^*$, where 
	$$m^*\eqdef\max_{v\in V}\big(m^*_v\big). $$

If the integer  $ \theta $ is given, then the following simple two phase scheme can solve stabilizing 
	consensus on a dynamic graph with a non-empty kernel: 
The first  phase consists of the first $\theta$ rounds in which the Min algorithm is applied. 
This allows for each agent~$u$ to compute the value $m^*_u$ in the variable $x_u$. 
In the second  phase  starting  at round $\theta+1$, for each agent $u$  the variable $y_u$   
	is repeatedly set  to the maximal $m^*_v$ value $u$ has heard~of.
	
Since $\theta$ is not given, we implement the above scheme by assigning to each agent~$u$   at each round $t$  
	an integer $\theta_u(t)\leq t$,  and  by computing the value $y_u(t)$, 
	with the first phase consisting of the interval $[1,\theta_u(t)]$.
	
For this procedure   to be correct, we need that eventually  $\theta_u(t)\geq \theta$ for each agent $u\in V$. 
This is the case if $\lim_{t\to\infty}\theta_u(t)=\infty$.
Assuming that each agent $v$ has computed the value of $m^*_v$ by round $\theta_u(t)$, we also need that 
	each agent~$u$   hears of some agent in the kernel during the round interval $[\theta_u(t) +1, t]$.
In conclusion, $\theta_u(t)$ must be chosen (1) large enough to ensure that each agent~$v$ has computed 
	$m_v^*$ by round~$\theta_u(t)$ and (2) small enough to guarantee that $u$  hears of some agent in the 
	kernel during the period $[\theta_u(t) +1, t]$.


A {\em MinMax rule} for  the variable~$y_u$ is an update rule of the form 
	\begin{equation}\label{eq:minmax}
	  y_u(t) =   \max_{v\in \Ina_u(\theta_u (t)+1 : t) } \, \left ( \min_{w \in \Ina_v(1: \theta_u  (t) ) } \left (y_w(0) \right) \right  )
 	  \end{equation}
	where    $\theta_u (t)$ is any integer in the interval~$ [0,   t ]$.
A \emph{MinMax algorithm} is an algorithm in which for each agent $u$ and each round $t$, the value  of   $y_u$ is updated   
	by a MinMax rule.  
It is determined  by  the way the integer-valued  functions $\theta_u $, 
	called \emph{cut-off functions}, are chosen.  

 %
%
%

We now prove the basic property on which our strategy  relies.

\begin{lem}\label{lem:kermin}
In any active run, if $u$ is in $\Ker(\dG)$, then for every agent $v$   it holds that 
	$\Ina_u(1:\infty)\subseteq\Ina_v(1: \infty)$, and $ m^*_u = m^*$.
\end{lem}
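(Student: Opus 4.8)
The plan is to prove the two assertions in sequence, using the fact that $u \in \Ker(\dG)$ gives $u$ strong reachability guarantees that translate into reachability guarantees for whoever $u$ can reach. First I would unpack the hypothesis $u \in \Ker(\dG)$ via Proposition~\ref{pro:ker}: since $\Ker(\dG) = \CRoo(\idG(\infty))$, the node $u$ is a central root of $\idG(\infty)$, so for every node $w$ the edge $(u,w)$ lies in $\idG(\infty)$; equivalently, by the definition of the kernel, for every $t$ and every $w$ there is some $t' \geq t$ with $(u,w) \in E(\dG(t:t'))$, i.e.\ $u$ reaches every node over some suffix of time.

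For the first inclusion $\Ina_u(1:\infty) \subseteq \Ina_v(1:\infty)$, I would take an arbitrary $w \in \Ina_u(1:\infty)$ and show $w \in \Ina_v(1:\infty)$. Unwinding, $w \in \Ina_u(1:\infty)$ means there is a $w \sim u$ path (through active agents or self-loops) in some interval $[1,t_1]$. The key idea is then to \emph{concatenate} paths: because $u$ is in the kernel, for any target $v$ and starting from round $t_1+1$, there is a later round $t_2$ such that $u$ reaches $v$ in the interval $[t_1+1, t_2]$, giving an active $u \sim v$ path on that interval. Splicing the $w \sim u$ path on $[1,t_1]$ with the $u \sim v$ path on $[t_1+1,t_2]$ produces a $w \sim v$ path in $[1,t_2]$, which witnesses $w \in \Ina_v(1:t_2) \subseteq \Ina_v(1:\infty)$. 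I should be careful that the reachability guarantee from the kernel is about $\dG$, but here I need it in $\dGa$ (active agents); since the run is active, every agent is eventually active, so for $t$ large enough $\dGa(t)$ and $\dG(t)$ agree up to passive-agent edges that do not matter on a suffix, and the kernel reachability can be realized through active agents --- this compatibility is the point to state cleanly.

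For the second assertion, $m^*_u = m^*$, recall $m^*_u = \min_{v \in \Ina_u(1:\infty)} \mu_v$ and $m^* = \max_{v \in V} m^*_v$. The inclusion just proved gives $\Ina_u(1:\infty) \subseteq \Ina_w(1:\infty)$ for \emph{every} agent $w$, and since the minimum of $\mu$ over a larger set is no larger, this yields $m^*_w \leq m^*_u$ for all $w$, hence $m^* = \max_w m^*_w \leq m^*_u$. For the reverse inequality, $m^*_u \leq m^*$ is immediate because $m^*_u$ is one of the terms $m^*_v$ appearing in the maximum defining $m^*$ (taking $v = u$). Combining the two gives $m^*_u = m^*$.

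The main obstacle I expect is the path-concatenation step, specifically making rigorous the passage between the kernel's reachability statement (phrased over $\dG$ and over arbitrary intervals) and the active-agent reachability in $\dGa(1:\infty)$ that the definitions of $m^*_u$ and $\Ina_u$ actually use. The delicate points are: (i) ensuring the $u \sim v$ segment can be chosen to start exactly where the $w \sim u$ segment ends (handled by applying the kernel condition with $t = t_1 + 1$), and (ii) justifying that the path can be routed through \emph{active} agents, which relies on the run being active so that all agents have finite starting rounds and the persistent self-loops keep every agent reachable from itself across the waiting period. Once these two interval-bookkeeping issues are settled, the rest is routine manipulation of minima and maxima.
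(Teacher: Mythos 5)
Your proposal is correct and follows essentially the same route as the paper: concatenate a $w\sim u$ path on an initial interval with a kernel-provided $u\sim v$ path on a subsequent interval, after first waiting (via the permanent self-loops) until a round by which all agents are active so that reachability in $\dG$ and in $\dGa$ coincide, and then derive $m^*_u=m^*$ by the same min/max comparison. The active-agent compatibility you flag as "the point to state cleanly" is handled in the paper exactly as you suggest, by choosing the splice round $t_0$ large enough that $w\in\Ina_u(1:t_0)$ and every agent is active, whence $\In_v(t_0+1:t_1)=\Ina_v(t_0+1:t_1)$.
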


\begin{proof}
Let $w$ be an arbitrary agent in $\Ina_u(1:\infty)$, and let $t_0\in\IN$ be such that  $w\in\Ina_u(1:t_0)$ and  
  all agents are active at round $t_0$. 
Since $u$  is in $\Ker(\dG)$, there is some round $t_1>t_0$ such that  $u\in\In_v(t_0+1:t_1)=\Ina_v(t_0+1:t_1)$. 
This implies that  $(w,v)$ is an edge of  $\dGa(1:t_1)$, and hence   $w\in\Ina_v(1: \infty)$.

From the definition of $m^*_u$, it follows that  $m^*_u \geq m^*_v $ for every agent $v $,
	and hence $m^*_u \geq m^* $.
By definition of~$m^*$, it holds that $m^*_u  \leq m^*$.
Therefore we have that $m^*_u  = m^*$ as required.
\end{proof}

\section{Safe MinMax Algorithms for Stabilizing Consensus}
 
We now define the subclass of \emph{safe MinMax algorithms}, and  
	present properties of dynamic graphs  guaranteeing  that safe MinMax algorithms 
	always stabilize on the value~$m^*$.

 \subsection{Definition of safe  MinMax algorithms }
 
 Let $m_u(t)$ be the minimal input value that $u$ has heard of by round $t$, i.e.,
	$$m_u(t) \eqdef  \min_{v\in \Ina_u(1:t)}    (\mu_v)     .$$ 
Using this notation,   the update rule~(\ref{eq:minmax}) can then be rewritten into
 \begin{equation}\label{eq:minmax2}
 y_u(t) =   \max_{v\in \Ina_u \left(\theta_u (t)+1 : t \right) } \,  \left ( m_v (\theta_u (t) )  \right)   .
 \end{equation}
 
By definition, the sequence $\big ( m_u(t) \big)_{t\geq 1}$ is non-increasing and  lower-bounded by~$m_u^*$. 
Thus it stabilizes to some limit value at some round denoted~$t_u$.
We let $t^* = \max\{t_v:v\in V\}$. 

\begin{lem}\label{lem:limit}
For each agent~$u$, $\lim_{t \rightarrow \infty} m_u(t) = m_u^*$.
\end{lem}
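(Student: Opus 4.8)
The plan is to exploit the fact that $m_u^*$ is the minimum of $\mu_v$ over the set $\Ina_u(1:\infty)$, which by definition is the increasing union $\cup_{t\geq 1}\Ina_u(1:t)$ of the finite-interval in-neighbor sets. Since $m_u(t)$ is precisely $\min_{v\in\Ina_u(1:t)}\mu_v$, convergence of $m_u(t)$ to $m_u^*$ follows once I show that the sets $\Ina_u(1:t)$ exhaust $\Ina_u(1:\infty)$ after finitely many rounds. I will in fact obtain the stronger statement that $m_u(t)$ not only converges but stabilizes to $m_u^*$ in finite time, which also identifies the stabilization round $t_u$ introduced just above with (an upper bound on) the round at which the in-neighbor set stops growing.

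First I would record the monotonicity of the in-neighbor sets. Because every digraph $\dGa(t)$ carries a self-loop at each node, any $v\sim u$ path in the interval $[1,t]$ extends to a $v\sim u$ path in $[1,t+1]$ by appending the self-loop at $u$; hence $\Ina_u(1:t)\subseteq\Ina_u(1:t+1)$ for every $t$. Thus $\big(\Ina_u(1:t)\big)_{t\geq 1}$ is a non-decreasing chain of subsets of the finite node set $V$.

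A non-decreasing chain of subsets of a finite set is eventually constant, so there is a round $t_0$ with $\Ina_u(1:t)=\Ina_u(1:t_0)$ for all $t\geq t_0$. By the definition $\Ina_u(1:\infty)=\cup_{t\geq 1}\Ina_u(1:t)$, this common value is exactly $\Ina_u(1:\infty)$. Consequently, for every $t\geq t_0$,
\[
  m_u(t)=\min_{v\in\Ina_u(1:t)}\mu_v=\min_{v\in\Ina_u(1:\infty)}\mu_v=m_u^*,
\]
and in particular $\lim_{t\to\infty}m_u(t)=m_u^*$.

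There is no genuine obstacle here: the argument rests only on the self-loop assumption (for monotonicity of the in-neighbor sets) and on the finiteness of $V$ (for stabilization of the chain). The single point that deserves to be stated carefully is that the increasing union defining $\Ina_u(1:\infty)$ is attained at a finite stage $t_0$, which is immediate precisely because all the sets $\Ina_u(1:t)$ live inside the finite set $V$.
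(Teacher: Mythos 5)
Your proof is correct and follows essentially the same route as the paper's: both arguments reduce to the observation that $\Ina_u(1:\infty)$ is an increasing union (by the self-loop assumption) of subsets of the finite set $V$, so the relevant in-neighbors appear at a finite stage. The only cosmetic difference is that you establish stabilization of the entire set $\Ina_u(1:t)$, whereas the paper merely waits for a single minimizing agent $v$ with $\mu_v=m_u^*$ to enter $\Ina_u(1:t)$ and combines this with the monotonicity of $m_u(t)$ noted just before the lemma; both yield the stronger conclusion that $m_u(t)$ equals $m_u^*$ from some round on.
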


\begin{proof}
By definition of $m_u^*$, there exist some agent $v$ and some round $t$ such that
	$ v \in \Ina_u(1:t)$ and $ m_u^* = \mu_v $.
Hence, we get $m_u(t) \leq  \mu_v$, and so $m_u(t) \leq  m_u^*$.
Since $\Ina_u(1:t) \subseteq \Ina_u(1:\infty)$, we have $m_u(t) \geq  m_u^*$, and the lemma follows.
\end{proof}

Now consider an arbitrary agent $u$. 
Our goal is to set restrictions on the cut-off function~$\theta_u $ enforcing that eventually $y_u(t)= m^*$.
The first restriction   is that for all large enough $t$, 
  \begin{equation}\label{eq:first}
  \forall v\in V, \ ~	m_v(\theta_u(t))=m^*_v .
  \end{equation} 
Because the sequence $\big ( m_u(t) \big)_{t\geq 1}$ is stationary and by Lemma~\ref{lem:limit},  
	the condition~(\ref{eq:first}) is satisfied for all large enough $t$ if  $\lim_{t\to\infty} \theta_u(t)=\infty$. 
  	 
Assuming that (\ref{eq:first}) holds for some $t\in\IN$, we use Lemma \ref{lem:kermin} to
  	  show that if $\Ina_u  (\theta_u (t)+1 : t  ) $ contains an agent from $\Ker(\dG)$ then $y_u(t)=m^*$ as needed.   	 
In a large class of dynamic graphs with non-empty kernels, the latter condition is  satisfied whenever 
	$t-\theta(t)$ is larger than some constant (which may depend on the given dynamic graph). 
So our second restriction is that $\lim_{t\to\infty}   t - \theta_u(t) = \infty $. 

The above discussion leads to the following definition: A MinMax algorithm is {\em safe} if in each of its active runs,
	it holds that 
	\begin{equation}\label{eq:safe}
	\forall u \in V,~	\lim_{t\to\infty}\theta_u(t)=  \lim_{t\to\infty}  t - \theta_u(t) = \infty .
	\end{equation}

\begin{thm}\label{thm:minmax}
Any active run of a safe MinMax algorithm on a dynamic graph $\dG$  achieves stabilizing consensus 	
	if  there is a positive integer~$s$ such that 
	\begin{equation}\label{eq:ker}
	\forall t\geq s,\forall u\in V,~	In_u \big(\theta_u(t) + 1 : t \big)\cap \Ker (\dG)\neq\emptyset.
	\end{equation}
\end{thm}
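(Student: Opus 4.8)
The plan is to verify the two defining properties of stabilizing consensus separately: validity is a formal consequence of the shape of the update rule, while the real work lies in establishing eventual agreement, where I would show that every $y_u(t)$ eventually equals the global value $m^*$.

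Validity is immediate. By the initialization convention $y_w(0)=\mu_w$, each value $y_u(t)$ produced by the MinMax rule~(\ref{eq:minmax}) is a nested maximum-over-minima of the quantities $y_w(0)=\mu_w$. Because of the self-loops, every in-neighbor set $\Ina_u(\theta_u(t)+1:t)$ and $\Ina_v(1:\theta_u(t))$ is finite and non-empty, so the optimizing indices $v$ and $w$ exist and $y_u(t)$ coincides with $\mu_w$ for the optimal $w$. Hence at every round $t$ and for every agent $u$ there is an agent $w$ with $y_u(t)=\mu_w$, which is exactly the validity requirement.

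For eventual agreement I would prove that $y_u(t)=m^*$ for every agent $u$ and every sufficiently large $t$, then sandwich the right-hand side of the update rule between $m^*$ and $m^*$. Fix $u$. Since the algorithm is safe, $\lim_{t\to\infty}\theta_u(t)=\infty$. By Lemma~\ref{lem:limit} each sequence $\big(m_v(t)\big)_{t\geq 1}$ is non-increasing and reaches its limit $m_v^*$ at round $t_v$; with $t^*=\max_{v\in V}t_v$ there is thus a round past which $\theta_u(t)\geq t^*$, and for such $t$ we have $m_v(\theta_u(t))=m_v^*$ for all $v\in V$, which is precisely condition~(\ref{eq:first}). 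Rewriting~(\ref{eq:minmax2}) then gives $y_u(t)=\max_{v\in\Ina_u(\theta_u(t)+1:t)} m_v^*$. The upper bound is free, since every $m_v^*$ is at most $m^*=\max_{w\in V}m_w^*$, so $y_u(t)\leq m^*$. For the matching lower bound I invoke the kernel hypothesis~(\ref{eq:ker}): because the run is active there is a round $s_{\max}=\max_{u\in V}s_u$ beyond which every agent is active, and once $\theta_u(t)\geq s_{\max}$ the interval $[\theta_u(t)+1,t]$ involves only active agents, so $\In_u(\theta_u(t)+1:t)=\Ina_u(\theta_u(t)+1:t)$; then for $t\geq s$, hypothesis~(\ref{eq:ker}) supplies a node $k\in\Ker(\dG)$ with $k\in\Ina_u(\theta_u(t)+1:t)$, and Lemma~\ref{lem:kermin} yields $m_k^*=m^*$, whence $y_u(t)\geq m_k^*=m^*$.

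Combining the two bounds gives $y_u(t)=m^*$ whenever $t\geq s$ and $\theta_u(t)\geq\max(t^*,s_{\max})$, and by safeness the latter inequality holds for all large $t$. Since $V$ is finite, taking the maximum of these per-agent thresholds produces a single round $S$ such that $y_u(t)=m^*$ for all $u\in V$ and all $t\geq S$; this is eventual agreement, with consensus value $m^*$. I expect the only delicate point to be the bookkeeping of the several thresholds and, in particular, the passage from $\In_u$ in hypothesis~(\ref{eq:ker}) to the $\Ina_u$ appearing in the rewritten rule, which is exactly where the active-run assumption enters; notice also that only the $\lim\theta_u(t)=\infty$ half of safeness is used here, the other half being what makes hypothesis~(\ref{eq:ker}) attainable in the concrete constructions of later sections. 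The mathematical core is simply the two-sided bound on a maximum of $m_v^*$ values, anchored from below by a kernel node via Lemma~\ref{lem:kermin}.
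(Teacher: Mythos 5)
Your proposal is correct and follows essentially the same route as the paper's proof: use safeness to push $\theta_u(t)$ past $t^*$ and past all starting rounds so that the update rule collapses to $y_u(t)=\max_{v\in \In_u(\theta_u(t)+1:t)} m^*_v$, bound this above by $m^*$ trivially, and below by $m^*$ via the kernel node supplied by hypothesis~(\ref{eq:ker}) together with Lemma~\ref{lem:kermin}. The only difference is that you also spell out the validity property and the $\In_u$ versus $\Ina_u$ bookkeeping, which the paper leaves implicit.
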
 

\begin{proof}
Without loss of generality, assume that $s\geq t^*$  and $s  \geq \max_{v\in V} (s_v)$. 
Let us consider an arbitrary agent~$u$.
Since the algorithm is safe, there is a positive integer $t_0$ such that  
	$\theta_u(t) \geq s$ for all $t\geq t_0$. 
Then for $t\geq t_0$, Equation~(\ref{eq:minmax2}) can be rewritten into
	\begin{equation}\label{eq:y=max}
	 y_u(t) =   \max_{v\in \In_u \left(\theta_u (t)+1 : t \right) } \,    \left ( m^*_v  \right )    .
	 \end{equation} 
 This immediately implies that $ y_u(t)  \leq m^* $.

We also obtain from 
(\ref{eq:y=max}) that for every agent~$v \in \In_u(\theta_u(t) + 1 : t)$,
	it holds that $ y_u(t) \geq m^*_v$.
By  (\ref{eq:ker}), the set $\In_u(\theta_u(t) + 1 : t)$ contains at least one agent $v$ in $\Ker(\dG)$.
Lemma~\ref{lem:kermin} implies that  $ m^*_v  = m^*$,
	and so $ y_u(t)  \geq m^* $. 
	
We conclude that for all $t>t_0$,  it holds that $y_u(t)=m^*$, i.e., the run achieves stabilizing consensus on $m^*$.
\end{proof}

As observed, the Min (or Max) algorithm solves stabilizing consensus 
	in  any \isc~dynamic graph. 
Simple examples show that this is not the case for some MinMax algorithms.
However, as a direct consequence of Theorem~\ref{thm:minmax}, we obtain the following result.

\begin{cor}\label{cor:stronglyconnected}
Every safe MinMax algorithm solves the stabilizing consensus problem
	in the network model of 
	infinitely connected dynamic graphs.
\end{cor}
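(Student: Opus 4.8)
The plan is to derive the corollary directly from Theorem~\ref{thm:minmax} by checking that its hypothesis~(\ref{eq:ker}) holds automatically for every infinitely connected dynamic graph. By the definition given in the paper, $\dG$ is infinitely connected precisely when $\Ker(\dG) = V$, so the key observation is simply that the kernel is the whole vertex set.

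First I would recall that, because of the permanent self-loops, every set $\In_u(\theta_u(t)+1 : t)$ contains the node~$u$ itself, for every agent~$u$ and every round~$t$; this holds as well in the degenerate case $\theta_u(t) = t$, where the interval is empty and the set reduces to $\{u\}$ by convention. Since $u \in V = \Ker(\dG)$, the intersection $\In_u(\theta_u(t)+1 : t) \cap \Ker(\dG)$ contains~$u$ and is therefore non-empty. Consequently condition~(\ref{eq:ker}) holds with $s = 1$, for any choice of cut-off functions~$\theta_u$.

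Having established~(\ref{eq:ker}), I would invoke Theorem~\ref{thm:minmax} to conclude that any active run of a safe MinMax algorithm on such a~$\dG$ achieves stabilizing consensus. As the network model of infinitely connected dynamic graphs consists exactly of the dynamic graphs~$\dG$ with $\Ker(\dG) = V$, this shows that every safe MinMax algorithm solves the stabilizing consensus problem in this model. There is essentially no obstacle here: the only point worth isolating is that the self-loop convention forces~$u$ into its own in-neighbor set, so that no genuine connectivity argument is needed to verify the hypothesis, and the safety condition~(\ref{eq:safe}) enters only through the statement of Theorem~\ref{thm:minmax} itself rather than in the verification of~(\ref{eq:ker}).
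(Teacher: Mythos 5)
Your proposal is correct and follows exactly the route the paper intends: the paper states Corollary~\ref{cor:stronglyconnected} as a direct consequence of Theorem~\ref{thm:minmax}, and your verification that condition~(\ref{eq:ker}) holds with $s=1$ --- since the self-loop convention puts $u$ in $\In_u(\theta_u(t)+1:t)$ and $\Ker(\dG)=V$ --- is precisely the one-line argument being left implicit. Nothing is missing.
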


As for dynamic graphs that are rooted with a bounded delay, the combination of 
	Proposition~\ref{prop:boundpath} and Theorem~\ref{thm:minmax} yields the following corollary.
	
\begin{cor}\label{cor:bdrooted}
Every safe MinMax algorithm solves the stabilizing consensus problem
	in the network model of dynamic graphs that are  rooted with a bounded delay.
\end{cor}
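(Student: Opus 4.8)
The plan is to verify, for each fixed dynamic graph $\dG$ that is rooted with some delay~$T$, that the hypothesis~(\ref{eq:ker}) of Theorem~\ref{thm:minmax} holds in every active run of a safe MinMax algorithm; the conclusion then follows at once from that theorem. All the work is thus in producing a round~$s$ for which $\In_u(\theta_u(t)+1:t)\cap\Ker(\dG)\neq\emptyset$ for all $t\geq s$ and all $u$, and this is precisely where Proposition~\ref{prop:boundpath} enters. Note that~(\ref{eq:ker}) is a condition on $\In_u$ in $\dG$, matching the graph-theoretic conclusion of Proposition~\ref{prop:boundpath}, so no interaction with the starting rounds is needed at this level.

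First I would set $N=T(n-|\Ker(\dG)|)$ and invoke Proposition~\ref{prop:boundpath} to obtain a threshold~$s_0$ such that for every $t\geq s_0$ and every node~$u$, some kernel node reaches~$u$ over the fixed-length window $[t,t+N]$, i.e. $\In_u(t:t+N)\cap\Ker(\dG)\neq\emptyset$. The crucial feature is that the window length~$N$ does not depend on~$t$; the remaining task is to fit such a window inside the interval $[\theta_u(t)+1,t]$.

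This is exactly what the safety condition~(\ref{eq:safe}) provides. Since $\lim_{t\to\infty}\theta_u(t)=\infty$, eventually $\theta_u(t)+1\geq s_0$; and since $\lim_{t\to\infty}t-\theta_u(t)=\infty$, eventually $t-\theta_u(t)\geq N+1$, that is $t\geq(\theta_u(t)+1)+N$. For such~$t$ I put $t'=\theta_u(t)+1$, so that $t'\geq s_0$ and $[t',t'+N]\subseteq[\theta_u(t)+1,t]$. Proposition~\ref{prop:boundpath} then yields a kernel node $w\in\In_u(t':t'+N)$. Because every node carries a permanent self-loop, reachability is monotone under widening of the interval: a $w\sim u$ path in $[t',t'+N]$ extends, by padding with self-loops at its ends, to a $w\sim u$ path in the larger interval $[\theta_u(t)+1,t]$. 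Hence $w\in\In_u(\theta_u(t)+1:t)\cap\Ker(\dG)$, which establishes~(\ref{eq:ker}) for this~$u$. Since $V$ is finite, taking the maximum over~$u$ of the finitely many per-agent thresholds yields a single~$s$ valid for all agents at once, and Theorem~\ref{thm:minmax} then gives stabilizing consensus (on the value~$m^*$) in the run. As both the run and the bounded-delay graph~$\dG$ were arbitrary, the corollary follows.

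The step demanding the most care is the monotonicity of $\In_u$ under interval widening: it is the hinge that lets the fixed window supplied by Proposition~\ref{prop:boundpath} be absorbed into the growing interval $[\theta_u(t)+1,t]$, and it holds only because of the standing assumption of permanent self-loops (the same assumption underpinning Proposition~\ref{prop:boundpath} itself). Beyond that, the only points to watch are that the two safety limits be applied simultaneously, so that a complete window of length~$N$---not merely its left endpoint---fits inside the interval, and that the per-agent thresholds be consolidated using the finiteness of~$V$.
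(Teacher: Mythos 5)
Your proof is correct and follows exactly the route the paper intends: the paper gives no explicit proof, stating only that the corollary follows by combining Proposition~\ref{prop:boundpath} with Theorem~\ref{thm:minmax}, and your argument is precisely that combination with the details (the fixed window length $N=T(n-|\Ker(\dG)|)$, the two safety limits, and the self-loop monotonicity of $\In_u$ under interval widening) filled in correctly.
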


Interestingly, Corollaries~\ref{cor:stronglyconnected} and~\ref{cor:bdrooted} are the analogs
	for stabilizing consensus and MinMax algorithms of the fundamental solvability results by Moreau~\cite{Mor05}
	and by Cao, Morse, and Anderson~\cite{CMA08a} for asymptotic consensus and averaging algorithms.
Observe, however, that Corollary~\ref{cor:stronglyconnected} holds for all infinitely connected dynamic graphs
	while the Moreau's theorem requires the communication graph to be bidirectional  at every round.
		
\subsection{A limitation of safe MinMax algorithms}

A natural question raised by Theorem \ref{thm:minmax} is whether safe MinMax algorithms solve 
	 stabilizing consensus for every dynamic graph with a non-empty kernel. 
We show that this is not the case, and first  establish the following property of  safe MinMax algorithms.

\begin{lem}\label{lem:m*=y*}
In any run of a safe MinMax algorithm  in which stabilizing consensus is achieved, 
	the stabilizing consensus value is equal to~$m^*$.
\end{lem}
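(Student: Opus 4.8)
The plan is to show that the common limit value is pinned to $m^*$ by squeezing it between $m^*$ from above and from below, reusing the opening move from the proof of Theorem~\ref{thm:minmax} but \emph{without} the window hypothesis~(\ref{eq:ker}). Write $c$ for the consensus value, so that $y_u(t)=c$ for every agent $u$ and every sufficiently large $t$. First I would record the same rewriting as in Theorem~\ref{thm:minmax}: since the algorithm is safe, $\theta_u(t)\to\infty$, and since each sequence $\big(m_v(t)\big)_t$ is stationary from round $t^*$ on (Lemma~\ref{lem:limit}), there is a round $t_0$ such that for all $t\geq t_0$ and all $v\in V$ we have $m_v(\theta_u(t))=m_v^*$. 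Substituting into~(\ref{eq:minmax2}) gives, for $t\geq t_0$,
\[
  y_u(t)=\max_{v\in\Ina_u(\theta_u(t)+1:t)}\big(m_v^*\big).
\]

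For the upper bound I would simply observe that the maximum ranges over a subset of $V$, so $y_u(t)\leq\max_{v\in V}m_v^*=m^*$; letting $t\to\infty$ yields $c\leq m^*$. This step is identical to the corresponding one in Theorem~\ref{thm:minmax} and uses neither the kernel hypothesis nor the consensus assumption. The interesting direction is the lower bound, where I no longer have~(\ref{eq:ker}) available and therefore cannot locate a kernel agent inside $\Ina_u(\theta_u(t)+1:t)$. Instead I would exploit the consensus assumption together with the permanent self-loops: choose an agent $w$ realizing the maximum, i.e.\ $m_w^*=m^*$, which exists by the definition of $m^*$. Because of the self-loop at $w$ we have $w\in\Ina_w(\theta_w(t)+1:t)$ for every $t$ (and the degenerate case where the interval collapses leaves this set equal to $\{w\}$ by convention, so the inclusion still holds). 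Applying the displayed identity to $w$ then gives $y_w(t)\geq m_w^*=m^*$ for all large $t$, and since stabilizing consensus is achieved, $y_w(t)\to c$, whence $c\geq m^*$.

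Combining the two bounds yields $c=m^*$, as claimed. I expect the only genuinely delicate point to be the reduction to the identity above: one must check that $\theta_u(t)$ grows fast enough that every $m_v(\theta_u(t))$ has \emph{simultaneously} reached its limit $m_v^*$, which is exactly what safety together with the stationarity of the $m_v$ (Lemma~\ref{lem:limit}) supplies. Everything else is self-loop bookkeeping, and it is worth emphasising in the write-up that the consensus assumption enters only to force all agents --- in particular the maximizing agent $w$ --- to share the single limit $c$, so that the lower bound $y_w(t)\geq m^*$ attached to $w$ propagates to the global consensus value. This is precisely the mechanism that replaces the kernel-based lower bound of Theorem~\ref{thm:minmax}.
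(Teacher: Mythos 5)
Your proof is correct and takes essentially the same route as the paper's: the upper bound $c\leq m^*$ comes from rewriting $y_u(t)$ as a max of $m_v^*$ over a subset of $V$ once $\theta_u(t)$ exceeds $t^*$, and the lower bound comes from picking an agent $w$ with $m_w^*=m^*$ and using its self-loop (plus the consensus assumption to transfer $y_w(t)\geq m^*$ to the common limit). The paper phrases the lower bound slightly differently, via the inequality $y_w(t)\geq m_w(t)$ valid at every round, but the mechanism is identical.
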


\begin{proof}
Let $\tilde{y}$ be the stabilizing consensus value, and consider a agent $u$ such that $m^*_u =m^*$. 
Observe that for all rounds~$t$  it holds that $y_u(t)\geq m_u(t)$, and for all large enough $t$ 
	we have $y_u(t)=\tilde{y}$ and $m_u(t)=m^*$. 
Thus we get that $\tilde{y}=y_u(t)\geq m_u(t)=m^*$ for all large enough $t$, and hence $\tilde{y}\geq m^*$.   

Conversely, let us consider an arbitrary agent~$u$.
Since the algorithm is safe, $\theta_u(t)>t^*$ if~$t$ is  large enough.
For each such~$t$, by (\ref{eq:minmax2}), we have 
	$ y_u (t) =   \max_{v\in \Ina_u \left(\theta_u (t)+1 : t \right) } \,    \left (m^*_v  \right) $,
	which implies that $ y_u(t) \leq m^*$.
The value of~$y_u$ stabilizes to~$\tilde{y}$,
	which shows that $ \tilde{y}  \leq m^*  $.
	
It follows that $\tilde{y} =m^*$, as claimed.
\end{proof}

%

\begin{thm}\label{thm:limitations}
	There is no safe MinMax algorithm that solves stabilizing consensus in the network model
	of dynamic graphs  with non-empty kernels.
\end{thm}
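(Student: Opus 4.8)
The plan is to fix an arbitrary safe MinMax algorithm and construct, adaptively, a single dynamic graph $\dG$ with a non-empty kernel together with an active run on which the algorithm never reaches agreement. The lever is Lemma~\ref{lem:m*=y*}: on any run of a safe MinMax algorithm that does stabilize, the consensus value must be $m^*$. Hence it suffices to exhibit one active run with $\Ker(\dG)\neq\emptyset$ in which two agents disagree infinitely often (so no value, in particular not $m^*$, is eventually agreed upon).

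The graph I would use has one ``source'' $a$ with input $\mu_a=1$ that can reach all the others, while the remaining agents $b_1,\dots,b_{n-1}$ have input $0$ and never reach anyone but themselves. Concretely, apart from the permanent self-loops, the only edges are $a\to b_j$ (for all $j$ at once), inserted at a sparse increasing sequence of rounds $\rho_1<\rho_2<\cdots$. Since these edges appear infinitely often, $\dG(\infty)$ is rooted at $a$, so by Proposition~\ref{pro:ker} we have $\Ker(\dG)=\{a\}\neq\emptyset$; moreover $m^*_a=1$, each $m^*_{b_j}=0$, and $m^*=1$. Because $a$ only ever hears itself, its window reduces to $\{a\}$ and $y_a(t)=1$ at every round, so it is enough to force each $y_{b_j}$ down to $0$ at infinitely many rounds.

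Choosing the $\rho_i$ adaptively is where the argument really happens. The key observation is that the safety condition~(\ref{eq:safe}) is imposed on \emph{every} active run, in particular on ``quiet'' runs in which $a$ stops emitting — even though such runs have empty kernel. So I would proceed in stages: having inserted the first $i-1$ edges up to round $\rho_{i-1}$, I consider the hypothetical continuation in which no further $a\to b_j$ edge is ever added. This continuation is an active run, hence by~(\ref{eq:safe}) every cut-off satisfies $\theta_{b_j}(t)\to\infty$; therefore there is a round $t_i>\rho_{i-1}$ with $\theta_{b_j}(t_i)>\rho_{i-1}$ for all (finitely many) $j$. For such $t_i$ the window $\Ina_{b_j}(\theta_{b_j}(t_i)+1:t_i)$ lies entirely after $\rho_{i-1}$ and contains no $a\to b_j$ edge, so it equals $\{b_j\}$ (the $b_k$ never reach $b_j$); by~(\ref{eq:minmax2}) this gives $y_{b_j}(t_i)=m_{b_j}(\theta_{b_j}(t_i))=0$. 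I then set $\rho_i=t_i+1$, actually insert the $i$-th edge there, and iterate, obtaining $t_1<t_2<\cdots\to\infty$ with $y_{b_j}(t_i)=0\neq1=y_a(t_i)$.

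The delicate point, and the main obstacle, is causality. The values $\theta_{b_j}(t_i)$ and $y_{b_j}(t_i)$ depend only on the communication history through round $t_i$, which is identical in the quiet continuation and in the finally constructed graph, since every inserted edge sits either at a round $\le\rho_{i-1}$ or at $\rho_i=t_i+1>t_i$. Thus the cut-offs and outputs forced by the quiet run transfer verbatim to the constructed run, even though the quiet runs (with empty kernels) are used only as a device to make the cut-offs large while the actual run keeps the kernel non-empty. The resulting active run, on a non-empty-kernel dynamic graph, has $a$ and the $b_j$ disagreeing at infinitely many rounds, so eventual agreement fails, contradicting the assumption that the algorithm solves stabilizing consensus in this model.
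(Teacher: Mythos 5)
Your proof is correct, but it takes a genuinely different route from the paper's. The paper alternates two directed chains $G$ (rooted at $u$) and $H$ (rooted at $v_1$): every finite-stage auxiliary graph $\dG_k$ has a \emph{non-empty} kernel, so the assumed correctness of the algorithm together with Lemma~\ref{lem:m*=y*} pins the consensus value of that auxiliary run to $m^*_k$, which alternates between $1$ and $0$; diagonalizing over the switch times yields a graph with kernel $\{u,v_1\}$ on which $y_{v_1}$ oscillates. You instead use a star with sparse broadcasts from $a$, and your auxiliary runs (the quiet continuations) have \emph{empty} kernels, so the contradiction hypothesis gives you nothing there; what you exploit is that the safety condition~(\ref{eq:safe}) is quantified over \emph{all} active runs, empty kernel or not, and that once $\theta_{b_j}(t)$ outgrows the last broadcast the MinMax rule can be evaluated explicitly, giving $y_{b_j}(t_i)=0$ while $y_a\equiv 1$. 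A consequence is that you never actually need Lemma~\ref{lem:m*=y*} --- despite calling it your lever, your construction is self-contained and only invokes the solvability hypothesis at the very end to reach a contradiction. Both proofs rest on the same causality/indistinguishability principle (the paper needs it to transfer $y_{v_1}(t_0)=1$ from $\dG_0$ to $\dG_1$; you need it to transfer $\theta_{b_j}(t_i)$ and $y_{b_j}(t_i)$ from the quiet continuation to the final graph), and you are right that this is licensed by the computational model, where cut-off functions are computed locally from the history up to the current round. The trade-off: the paper's argument is black-box in the MinMax rule but leans on Lemma~\ref{lem:m*=y*} and on correctness of the auxiliary runs; yours is white-box and more elementary, but hinges on the (correct) reading that safety is imposed even on runs whose kernel is empty.
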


\begin{proof}
The argument is by contradiction: suppose that there is a safe MinMax algorithm~$A$
	that solves stabilizing consensus in the network model ${\cal G}_{nek}$
	of dynamic graphs over a fixed set~$V$ of $n \geq 2$ nodes and with non-empty kernels.
Let us denote $V = \{ u, v_1, \dots, v_{n-1} \}$, and consider the  following two digraphs $G$ and $H$ 
	with the set of nodes $V$:
	\begin{enumerate}
		\item $G$ is the directed chain $u, v_1, \dots, v_{n-1}$;
		\item $H$ is the directed chain $v_1, \dots, v_{n-1}, u$. 
	\end{enumerate}
We consider the active runs of~$A$ in which all the nodes start at round 1 and where all the input values are equal to 0,
	except the input value of the node~$u$ that is equal to~1.	
	
First, we consider the dynamic graph $\dG_0$ in which $ \dG_0 (t) = G$ at all rounds~$t$.
Clearly, we have that $\Ker(\dG_0) = \{u\} $,
	and the corresponding maximal value    $m^*_0$ is $1$.
By Lemma~\ref{lem:m*=y*} the consensus value in this run is equal to~1, i.e.,
	there exists some positive integer~$t_0$ such that for each round~$t\geq t_0$ and each node~$w$, 
	it holds that $y_w(t) = 1$.
In particular, $ y_{v_1}(t_0) = 1$.
	
We now consider the dynamic graph $\dG_1$ such that  $ \dG_1 (t) = G$ for~$1 \leq t \leq t_0$
	and $ \dG_1 (t) = H $ for~$ t >  t_0 $.
Clearly $ \Ker(\dG_1) = \{v_1\} $, 
	and for the corresponding run of~$A$, we have $m^*_1= 0$.
By Lemma~\ref{lem:m*=y*}, the consensus value in this run is~0, i.e.,
	there exists some positive integer~$t_1$ such that for each round~$t \geq t_0 + t_1$ and each node~$w$, 
	it holds that $y_w(t) = 0$.
In particular,~$y_{v_1}( t_0 + t_1) = 0$.
	
	
By repeating the above construction,  
	we determine an infinite sequence of positive 
	integers~$ (t_k)_{k\in \IN}$ and the dynamic graph~$\dG_{\infty}$ defined by
	$$ \dG_{\infty}(t) = \left\{ 
	\begin{array}{ll}
	G & \mbox{ if } 1 \leq t \leq t_0 \mbox{ or }  \ t_0 + \dots +  t_{2k-1}  + 1 \leq t \leq t_0 + \dots +  t_{2k }  \\
	H & \mbox{ if } \  t_0 + \dots +  t_{2k } + 1  \leq t \leq t_0 + \dots +  t_{2k +1}   . 	
	\end{array} \right. $$
We easily check that 
	$ \Ker \big( \dG_{\infty} \big) = \{u, v_1\} $,
	and for the corresponding run of~$A$, it holds that	
	$$y_{v_1} (t) = \left\{ 
	\begin{array}{ll}
	1 & \mbox{ if }  t = t_0 + \dots +  t_{2k} \\
	0 & \mbox{ if }   t = t_0 + \dots +  t_{2k +1}  . 	
	\end{array} \right. $$
In this run with a non-empty kernel, the sequence $\left ( y_{v_1} (t) \right )$ is not convergent, 
	which violates the eventual agreement property.
\end{proof}

Extending the analogy above pointed out, we may observe that a similar impossibility result for averaging 
	algorithms and asymptotic consensus is proved
	in~\cite{BHOT05} with a different collection of three node dynamic graphs. 

\subsection{Convergence time of safe MinMax algorithms}\label{sec:converge}

Contrary to  safe averaging algorithms that converge in at most  an exponential (in the size~$n$ of the 
	network) number of rounds with dynamic graphs that are permanently rooted~\cite{CMA08a, CBFN15},  
	the convergence time of safe MinMax algorithms with such dynamic graphs may be arbitrarily large:
	For instance, inserting the complete digraph at one round $t > t^*$ may result in  changing the value~$m^*$, 
	and so may require the MinMax algorithm to stabilize again.

Thus MinMax algorithms are highly unstable with respect to  --~even sporadic~-- topology changes.
However, if we restrict our analysis to a dynamic graph~$\dG$ formed with a fixed rooted digraph,
	safe MinMax algorithms converge much faster than  safe averaging algorithms.
To see that, assume all nodes start at round one, and let $K= \Ker(\dG)$.
It is not hard to see that within less than $  |K|$ rounds, each  $m_v$ with $v\in K$ stabilizes  to~$m^*$. 
Proposition~\ref{prop:boundpath} states that for each node $u $ and each round $t$ it holds that 
	$$ K \cap\In_u(t:t+n-|K|) \neq\emptyset   . $$ 
It follows that if $| K |  \leq \theta_u(t) + 1 \leq t-( n-| K |)$, then $y_u (t) =m^*$.

Since $\lim_{t\to\infty}  \theta_u(t) = \lim_{t\to\infty}   t - \theta_u(t) = \infty $, there exists a positive integer
	$\alpha_u $ such that if $t > \alpha_u$, then it holds that
	$$ \theta_u(t) \geq | K | - 1 \mbox{ and } t - \theta_u(t) \geq n - | K |  + 1   .$$
We conclude that the algorithm stabilizes by round $\max_{u\in V} (\alpha_u)$ rounds.
In particular, setting $\theta_u(t)=\lfloor t/2\rfloor$ guarantees convergence within $2n$ rounds.

The latter result can be interestingly compared with the exponential lower bound proved in~\cite{OT11, CBFN15}:
	the convergence time of any safe averaging algorithm is exponential in~$n$ on the fixed rooted topology
	 of a \emph{Butterfly} digraph.

\section{Efficient Distributed Implementation of  MinMax Algorithms}\label{sec:implementation}

In this section,  we discuss distributed implementations of  MinMax algorithms in our  computing model.  
Figure~\ref{fig:implementation} presents  a general, efficient distributed scheme for this  implementation, 
	which is applicable whenever the {\em difference functions}  defined by
	$$\p_u =   t-\theta_u $$ are locally computable. 
The exact nature of the $\p_u$ functions is  left unspecified  (line~\ref{line:update p_u}).
	
Observe that the cut-off function $\theta_u$ satisfies the inequalities $0\leq \theta_u(t)\leq t$ 
	if and only if  $\p_u$ satisfies the same inequalities. 
The inequalities $ 0 \leq \p_u (t) \leq t$ can be easily enforced 
	by having the agent $u$ implement the simple round counter $C_u$ defined by  $C_u(t) = t- s_u$. 
Indeed, the difference function $\p_u(t)= f\big( C_u(t) \big) $ satisfies these two inequalities when  $f$ is 
	any integer-valued function such that $ 0 \leq f (t) \leq t$. 
Besides we can choose $f$ so that the difference function 
	$\p_u(t)= f\big( C_u(t) \big) $ provides a \emph{safe} MinMax algorithm:
	for instance,  we may set $f (k) = \lfloor k/2 \rfloor $ or $f( k  ) = \lfloor \log k \rfloor   $.
 
A possible, but quite inefficient way for implementing MinMax algorithms  consists in using a {\em  full information} 
	protocol, in which at each round $t$ each active agent sends its {\em local view} at round $t-1$ to all other agents;  
	the local view of $u$ at round $t$ 
	for $t\geq s_u-1$   is a rooted tree with labeled leaves, denoted $T_u(t)$,  defined inductively as follows: 
First, $T_u( s_u-1)$ is a single vertex labelled by $\mu_u$. 	
Assume now that at round $t$, the agent $u$ receives $k$ messages 
	with the trees~$T_1, \cdots, T_k$.
Then   $T_u(t)$ is the tree consisting of a root with $k$ children on which the  trees $T_1,\ldots T_k$ are hanged.
 Using  $T_u(t)$, the agent~$u$ can then easily compute $y_u(t)$ corresponding to the cut-off point
 	$\theta_u(t) = t-\p_u(t)$.

The point of our  implementation is precisely to avoid the construction of the trees~$T_u(t)$.
For that, each agent $u$ maintains, in addition to $y_u$ and $\p_u$,   a variable $x_u$ with values in ${\cal V}$.
At each round~$t$, the agent $u$ sets  $x_u$ to the minimal input value it has heard of,
	i.e., $x_u(t)=m_u(t)$. 
	
We say that an input value $\mu$ is {\em relevant for agent $u$ at round~$t$} if there is  an agent 
	$v\in\Ina_u(t-\delta_u(t)+1:t)$ 	such that  $x_v(t-\delta_u(t))=\mu$. 
Thus, the agent $u$ needs to set $y_u$ to its maximal relevant value at each round, which is done as 
	explained below.
	
 Just to simplify notation, we assume that the set  ${\cal V}$ of all the possible initial values  is finite and given.
To determine the set of its relevant input values, 
	the agent~$u$ maintains a vector of integers  $\AGE_u$ such that  for each $\mu\in{\cal V}$, 
	 $\AGE_u[\mu](t)$ is  the minimal number of rounds, by $u$'s local view at round $t$, that have passed 
	 since the last time some agent $v$ had set $x_v$ to $\mu$. 
Thus $\mu$ is relevant for $u$ at round~$t$ if and only if $\AGE_u[\mu](t)\leq\p_u(t)$.

\begin{figure} [h]
	\small
	\noindent\rule{17cm}{0.4pt}
	\begin{algorithmic}[1]
		\REQUIRE{}
		\STATE $x_u \in {\cal V} $, initially $\mu_u$\label{line:initxu} ;  $y_u \in {\cal V}  $, initially $\mu_u$ ;
		  $\p_u \in \IN $, initially 0 ;  $\AGE_u \in (\IN\cup\infty)^m$, initially $\infty^m$\label{line:init age vector}%
		\STATE $\AGE_u [x_u] \gets 0$\label{line:ageinit}
		\WHILE { TRUE}  
		\STATE send $  \AGE_u  $  to all agents
		\STATE receive $  \AGE_{v_1}     \cdots,  \AGE_{v_\ell}   $ 
		\STATE for all $\mu\in{\cal V}, \AGE_u[\mu]\gets 1+\min_{1\leq i\leq \ell}(\AGE_{v_i}[\mu])$\label{line:update age}
		\STATE $x_u\gets\min  \{\mu:\AGE_u[\mu]<\infty\} $\label{line:xu}
		\STATE $\AGE_u[x_u]\gets 0$\label{line:age0}
		\STATE   $\p_u\gets \cdots$  \label{line:update p_u}
		\STATE $y_u \gets \max  \{\mu :\AGE_u[\mu]\leq \p_u\}   $ \label{line:update y_u}
		\ENDWHILE
		\end{algorithmic}
	\caption{Distributed implementation of  a MinMax algorithm with the cut-off functions $  \theta_u = t- \p_u $.  }
	\label{fig:implementation}
	\noindent\rule{17cm}{0.4pt}
\end{figure}

Now we  show that the algorithm corresponding  to the difference functions $\p_u$ is a MinMax algorithm 
	with the cut-off functions $\theta_u=t-\p_u $.
We start by two preliminary  lemmas.



\begin{lem}\label{prop:X}
For  any agent $u$ and any round $t\geq 1$, 
	$ x_u (t) =  m_u(t)$.
\end{lem}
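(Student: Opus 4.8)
The plan is to prove, by induction on the round number $t$, a slightly stronger invariant that identifies the input values carrying a finite age with those $u$ has actually heard of:
$$\{\mu \in {\cal V} : \AGE_u[\mu](t) < \infty\} = \{\mu_v : v \in \Ina_u(1:t)\}.$$
Granting this, the lemma is immediate: line~\ref{line:xu} sets $x_u$ to the minimum of the left-hand set, and the subsequent reset on line~\ref{line:age0} only resets the age of that already-finite entry and therefore does not alter which entries are finite; hence $x_u(t)$ equals the minimum of the right-hand set, which is exactly $m_u(t) = \min_{v\in\Ina_u(1:t)}(\mu_v)$.

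For the inductive step at a round $t$ in which $u$ is active, I would record two matching recurrences, one algorithmic and one graph-theoretic. On the algorithmic side, the agents whose messages $u$ receives are precisely its active in-neighbours $\Ina_u(t)$, which (crucially) contains $u$ itself because of the self-loop, so that $u$'s own previous vector is retained; line~\ref{line:update age} then makes $\AGE_u[\mu](t)$ finite exactly when $\mu$ had finite age at round $t-1$ at some $v\in\Ina_u(t)$. On the graph side, the product identity $\dGa(1:t) = \dGa(1:t-1)\circ\dGa(t)$ gives the analogous decomposition of the in-neighbourhoods. Concretely,
$$\{\mu : \AGE_u[\mu](t)<\infty\} = \bigcup_{v\in\Ina_u(t)}\{\mu : \AGE_v[\mu](t-1)<\infty\} \quad\text{and}\quad \Ina_u(1:t) = \bigcup_{v\in\Ina_u(t)}\Ina_v(1:t-1).$$
Applying the induction hypothesis to each $v\in\Ina_u(t)$ at round $t-1$ matches the two unions term by term and closes the step.

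The base case and the passive rounds are handled together: for $t < s_u$ the agent does not execute the loop, so by the initialization (lines~\ref{line:init age vector}--\ref{line:ageinit}) only $\mu_u$ has finite age, while $\Ina_u(1:t)=\{u\}$ since a passive node has no active incoming edge other than its self-loop; both sides of the invariant equal $\{\mu_u\}$, in agreement with the convention $x_u(t)=\mu_u$ for $t<s_u$.

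I expect the main difficulty to be purely in the bookkeeping rather than in any conceptual hurdle: one must read off the end-of-round values from the in-order execution of lines~\ref{line:update age}--\ref{line:age0}, verify that the reset on line~\ref{line:age0} neither enlarges nor shrinks the finite-age set (because $x_u$ is selected among already-finite entries), and keep the self-loop and passive-agent conventions consistent so that the algorithmic and graph recurrences line up exactly. The self-loop in $\dGa(t)$ is the subtle point, as it is what guarantees that old minima are never forgotten and that $u\in\Ina_u(1:t)$, keeping the finite-age set non-empty throughout.
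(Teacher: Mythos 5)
Your proof is correct and takes essentially the same route as the paper, which simply declares the lemma an immediate consequence of the initialization and update rule of $x_u$; your induction establishing that $\{\mu : \AGE_u[\mu](t)<\infty\}=\{\mu_v : v\in\Ina_u(1:t)\}$ is exactly the bookkeeping needed to make that assertion precise. The points you flag — the self-loop keeping the finite-age set non-shrinking, the reset on line~\ref{line:age0} not changing which entries are finite, and the convention for passive rounds — are all handled correctly.
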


\begin{proof}
This is an immediate consequence of the initialization of the variable $x_u$ (line~\ref{line:initxu}), 
	 of its update rule (line~\ref{line:xu}), and the fact that if $t<s_u$ then $x_u(t)=x_u(s_u-1)$.
\end{proof}

\begin{lem}\label{prop:AGE}
If the agent $u$ is active at round $t$, then for each integer $k \in \{0, \dots, t\}$, 
	$$ \AGE_u[\mu](t) \leq k \Leftrightarrow 
	\exists v\in \Ina_u(t - k +1 : t ) , \ x_v   (t-k  ) = \mu  .$$
\end{lem}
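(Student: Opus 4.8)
The plan is to prove the equivalence by induction on the round~$t$, treating the interval length~$k$ as a universally quantified parameter, and to prove it for \emph{all} agents (active or passive) at round~$t$, since extending the statement to passive agents only streamlines the induction. The first step is to read off from the pseudo-code the one-round recurrence governing $\AGE_u$. From lines~\ref{line:update age}--\ref{line:age0}, for an agent $u$ active at round~$t$ one has
\begin{equation*}
\AGE_u[\mu](t) =
\begin{cases}
0 & \text{if } \mu = x_u(t),\\
1 + \min_{w\in\Ina_u(t)}\AGE_w[\mu](t-1) & \text{otherwise,}
\end{cases}
\end{equation*}
where the minimum ranges over the active in-neighbours of $u$ (those that actually send a message) and includes $u$ itself through its self-loop; for a passive $u$ the vector $\AGE_u$ stays frozen at its initial value. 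I would also record the companion \emph{min-propagation} identity $x_u(t) = \min_{w\in\Ina_u(t)} x_w(t-1)$, which follows from Lemma~\ref{prop:X} together with the decomposition $\Ina_u(1:t) = \bigcup_{w\in\Ina_u(t)}\Ina_w(1:t-1)$ coming from $\dGa(1:t) = \dGa(1:t-1)\circ\dGa(t)$.

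For the base case (the initial configuration, indexable by $t=0$) only $k=0$ is allowed, and both sides reduce to $\mu = \mu_u$ by the initialization in lines~\ref{line:init age vector}--\ref{line:ageinit}. In the inductive step I would first dispose of $k=0$ uniformly: $\Ina_u(t+1:t) = \{u\}$, so the right-hand side says $x_u(t) = \mu$, which is exactly the condition $\AGE_u[\mu](t) = 0$ read off the recurrence. For $k\geq 1$ with $u$ active I split on whether $\mu = x_u(t)$. When $\mu \neq x_u(t)$ the recurrence gives $\AGE_u[\mu](t) \leq k \iff \exists w\in\Ina_u(t):\AGE_w[\mu](t-1)\leq k-1$; applying the induction hypothesis at round $t-1$ with parameter $k-1$ to each such $w$ and then composing in-neighbourhoods through $\Ina_u(t-k+1:t) = \bigcup_{w\in\Ina_u(t)}\Ina_w(t-k+1:t-1)$ turns this into precisely the desired right-hand side $R_u(t,k,\mu)$.

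The delicate case, which I expect to be the main obstacle, is $\mu = x_u(t)$ with $k\geq 1$: here the left-hand side is automatically true ($\AGE_u[\mu](t)=0\le k$), so I must establish the right-hand side independently. This is exactly where the min-propagation identity is essential: from $x_u(t) = \min_{w\in\Ina_u(t)} x_w(t-1)$ I select an active in-neighbour $w$ with $x_w(t-1) = x_u(t) = \mu$, whence $\AGE_w[\mu](t-1)=0\le k-1$; the induction hypothesis then produces a time-respecting continuation $v\in\Ina_w(t-k+1:t-1)$ with $x_v(t-k)=\mu$, and prepending the round-$t$ edge $w\to u$ yields $v\in\Ina_u(t-k+1:t)$, giving $R_u(t,k,\mu)$. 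Finally the passive case ($t<s_u$) is immediate: $u$ is passive throughout $[t-k+1,t]$, so $\Ina_u(t-k+1:t)=\{u\}$ and $\AGE_u$ is still its initial value, and both sides again collapse to $\mu=\mu_u$. The only points requiring care are keeping the active/passive bookkeeping of $\Ina_u(t)$ consistent with ``who sends a message'' and checking the index bounds $0\le k-1\le t-1$ so that the induction hypothesis at round $t-1$ legitimately applies.
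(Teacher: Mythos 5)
Your proof is correct, but it takes a genuinely different route from the paper's. The paper separates the two implications: the direction from the existence of a witness $v$ to the bound on $\AGE_u[\mu](t)$ is proved directly, by following a $v\sim u$ path forward in time and observing that line~\ref{line:update age} lets the relevant AGE entry grow by at most one per hop; the converse is proved by induction on $k$ (with $t$ universally quantified inside the induction hypothesis), split into the cases $\AGE_u[\mu](t)\leq k$ --- where an existing witness at round $t-k$ is pushed back one more round by inspecting how $x_w$ was last updated --- and $\AGE_u[\mu](t)=k+1$ --- where the recurrence is unrolled once at the $u$ end. You instead run a single induction on the round~$t$ with $k$ universally quantified, prove the equivalence in one pass from the exact one-round recurrence for $\AGE_u$, and isolate the case $\mu=x_u(t)$, which you resolve with the min-propagation identity $x_u(t)=\min_{w\in\Ina_u(t)}x_w(t-1)$ derived from Lemma~\ref{prop:X}; that identity plays the role of the paper's case-(a) analysis of the $x_w$ update, only at the opposite end of the path. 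Your route buys a uniform treatment of both directions and a clean $k=0$ boundary, at the cost of having to extend the statement to passive agents so that the induction hypothesis applies to in-neighbours that became active only at round~$t$ (a point you correctly flag and handle); the paper keeps the statement as announced and needs no auxiliary identity beyond the pseudo-code, but pays with a two-case split and a separate direct argument for one direction. Both treatments handle asynchronous starts correctly.
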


\begin{proof}
First, assume that there is an  agent~$v \in  \Ina_u(t - k +1 : t )$ such that 
	$x_v  (  t-k  ) = \mu $, and let $t_0=\max\{t-k, s_v-1\}$. 
Since for $t<s_v$, we  have set $x_v(t)=x_v(s_v-1)$, it always holds that
	$x_v(t-k)=x_v(t_0)=\mu   $.
Moreover,  we easily check that 
	$v $ is in  $ \Ina_u(t_0+1 : t )  $.
Hence there exists a $v {\sim} u$ path in the interval $ [ t_0+1, t ] $
	that we denote by $ v_0=v, v_1, \dots , v_{t - t_0} = u $.
Because of the update rule of the vectors~$\AGE_w$, we deduce step by step that
	$$ \AGE_{v_1} [ \mu ] (t_0+1) \leq 1, \  \AGE_{v_2}[ \mu ] (t_0+2)  \leq 2, \ \dots \ , \  \AGE_{u} [ \mu ] ( t ) \leq t-t_0    .$$   
The claim follows by observing that $  t-t_0 \leq k $.

We now show by induction on $k$ the following implication:
	$$
	 \AGE_u[\mu](t) \leq k \Rightarrow 
	\exists v \in \Ina_u(t - k +1 : t ) , \ x_v (   t-k  ) = \mu  .
	$$

\begin{description}
\item[{\it Base case:}]
 $ \AGE_{u} [ \mu ] ( t ) = 0   $.
By lines~\ref{line:update age} and~\ref{line:age0}, we deduce that $x_u (t) = \mu$.
Then the agent~$v=u$  is in $\Ina_u (t+1:t)$ with $x_v (t) = \mu$, as required.

\item [{\it Inductive step:}]
Assume that the above implication holds for some non-negative integer~$k$,
	and let $  \AGE_u[\mu](t) \leq k  + 1$.
Then either (a) $  \AGE_u[\mu](t) \leq k  $   or (b) $  \AGE_u[\mu](t) =  k  + 1$.

\begin{description}
\item [\rm{(a)}] By inductive assumption, there is some agent~$w \in \Ina_u(t - k +1:t ) $
	such that $ \ x_w  (   t-k ) = \mu $.
	Then we consider the two following cases:
	\begin{enumerate}
	\item If $x_{w}(t-k-1)=x_{w}(t-k)$, then we let $v = w$. 
	\item Otherwise, $x_{w}(t-k-1)\neq x_{w}(t-k)$, which means that 
	in round $t-k$, $x_{w}$ was set  to $x_v(t-k)$ for some agent $v  $ in $\Ina_{w}(t-k)$ by executing line \ref{line:xu}.  
Thus we have that $ x_{w}(t-k)=x_v(t-k-1) = \mu$. 
	\end{enumerate}
In both cases, the proof of  the claim in case (a)  is completed by noting that 
	since $\dGa(t-k:t) = \dGa(t-k) \circ \dGa(t-k +1:t) $, we have that
	  $v\in\Ina_u(t-k:t)$.
\item [\rm{(b)}] By line~\ref{line:update age}, there is some agent $w$  in  $ \Ina_u(t) $ such that 
	$\AGE_w [\mu](t-1) = k  $.
	The inductive hypothesis implies that there exists some agent~$v $ in  $ \Ina_w( t -k : t -1 )$
	such that $$ x_v (  t - 1 - k  ) = \mu    .$$
	Since $\dGa(t-k : t ) = \dGa(t - k: t-1) \circ \dGa(t)$, it follows that $ v \in \Ina_u( t -k : t  )$
	as required.
\end{description}
\end{description}
\end{proof}
\begin{thm}\label{thm:distimp}
 Any instance of the scheme in Figure \ref{fig:implementation} is a MinMax algorithm,  with cut-off functions
	given at each round~$t$ by $ \theta_u (t)  = t - \p_u(t) $.
\end{thm}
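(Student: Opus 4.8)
The plan is to verify directly that the value $y_u(t)$ produced at line~\ref{line:update y_u} of the scheme coincides with the right-hand side of the MinMax rule~(\ref{eq:minmax2}) once we set $\theta_u(t)=t-\p_u(t)$. Recall that line~\ref{line:update y_u} assigns
$$ y_u(t) = \max\{\mu \in {\cal V} : \AGE_u[\mu](t) \leq \p_u(t)\}, $$
so the entire task reduces to identifying the set $\{\mu : \AGE_u[\mu](t)\leq\p_u(t)\}$ with the collection of values over which the maximum in~(\ref{eq:minmax2}) is taken.

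First I would fix an agent $u$ and a round $t$ at which $u$ is active, and set $k=\p_u(t)$. Since the implementation enforces $0\leq\p_u(t)\leq t$, this $k$ lies in $\{0,\dots,t\}$, so Lemma~\ref{prop:AGE} applies and gives
$$ \AGE_u[\mu](t)\leq\p_u(t) \iff \exists\, v\in\Ina_u\big(t-\p_u(t)+1:t\big),\ x_v\big(t-\p_u(t)\big)=\mu. $$
Writing $\theta_u(t)=t-\p_u(t)$ and invoking Lemma~\ref{prop:X} to replace $x_v\big(\theta_u(t)\big)$ by $m_v\big(\theta_u(t)\big)$, this is exactly the set equality
$$ \{\mu : \AGE_u[\mu](t)\leq\p_u(t)\} = \{\, m_v(\theta_u(t)) : v\in\Ina_u(\theta_u(t)+1:t)\,\}. $$

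Taking the maximum of both sides then yields
$$ y_u(t) = \max_{v\in\Ina_u(\theta_u(t)+1:t)} m_v(\theta_u(t)), $$
which is precisely the MinMax rule in the form~(\ref{eq:minmax2}); since $m_v(\theta_u(t))=\min_{w\in\Ina_v(1:\theta_u(t))}(\mu_w)$ and $\mu_w=y_w(0)$, this is the same as~(\ref{eq:minmax}) with cut-off $\theta_u(t)=t-\p_u(t)$, as claimed. I do not expect a genuine obstacle here, since the inductive heavy lifting has already been discharged in Lemma~\ref{prop:AGE}; the only points requiring care are checking that $k=\p_u(t)$ is a legitimate index for that lemma (guaranteed by $0\leq\p_u(t)\leq t$) and the boundary bookkeeping for rounds $t<s_u$, where both the assignment and the MinMax formula collapse to the initial value $\mu_u$ under the convention on passive agents.
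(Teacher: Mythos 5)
Your proposal is correct and follows essentially the same route as the paper's own proof: both apply Lemma~\ref{prop:AGE} with $k=\p_u(t)$ to translate the condition $\AGE_u[\mu](t)\leq\p_u(t)$ into the existence of a witness $v\in\Ina_u(\theta_u(t)+1:t)$ with $x_v(\theta_u(t))=\mu$, and then use Lemma~\ref{prop:X} to rewrite $x_v(\theta_u(t))$ as the inner minimum. Your added checks (that $\p_u(t)\in\{0,\dots,t\}$ legitimizes the use of Lemma~\ref{prop:AGE}, and the convention for rounds before $s_u$) are harmless refinements of the same argument.
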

\begin{proof}
If the agent $u$ is active at round $t$, then  we have 
	$ y_u(t) =  \max \, \{ \mu \in {\cal V} : \AGE_u[\mu]\leq \p_u(t) \}  $ (cf. line~\ref{line:update y_u}).
From Lemma~\ref{prop:AGE}, it follows  that 
	\begin{equation}\label{eq:x_u(t)}
	 y_u(t) = \max  \, \{ \mu \in {\cal V} : \exists v \in \Ina_u( \theta_u  +1 : t),  \ x_v (  \theta_u  )  = \mu \}
	  \end{equation}
	where $ \theta_u = t - \p_u(t) $.
By Lemma~\ref{prop:X},  it holds  that 
	\begin{equation}\label{eq:y_u(t)}
	x_v(\theta_u) =  \min_{w \in \Ina_v(1: \theta_u) } \big ( \mu_w   \big )  . 
	\end{equation}
By Equations (\ref{eq:x_u(t)}) and (\ref{eq:y_u(t)}), we get that, with  $ \theta_u = t - \p_u(t) $,
	$$ 	      y_u(t) =  \max_{v \in    \Ina_u( \theta_u  +1 : t) } \left ( \min_{w \in \Ina_v(1: \theta_u)} \mu_w  \right) . $$                 
 \end{proof}

The resulting MinMax algorithms share the same key features as averaging algorithms, namely they 
	assume no leader, do not use identifiers,  and tolerate asynchronous starts.
Unlike averaging algorithms, they are not memoryless, but  are  space efficient in the sense that except the 
	$AGE_u$ counters, which are bounded by $\log (t)$, all other local variables are of bounded size.
Actually, 	
	the unbounded  counters $\AGE_u[\mu]$ 
	-- which imply unbounded storage capacities and unbounded bandwidth --
	 are the discrete counterpart of the infinite precision required in averaging algorithms.

\subsubsection*{Stabilizing consensus with  failures}

In the light of Corollary~\ref{cor:bdrooted} and Theorem~\ref{thm:distimp}, we now revisit the problem of stabilizing consensus
	in the context of benign failures.
In particular, we consider completely connected systems and
	the failure model of crashes or the one of send omissions.
Basically, the resulting communication graphs are not strongly connected,
	 and thus naive approaches (e.g., the Min algorithm) do not work.
	 
To tackle this problem, we propose a strategy consisting first in emulating synchronized rounds and then in using a 
	safe MinMax algorithm on top of this emulation.
Indeed, as demonstrated in~\cite{CBS09}, synchronized rounds with a dynamic communication graph 
	can be easily emulated in any such distributed system, be it synchronous or asynchronous, when the 
	network size is given:~synchrony assumptions and failures are captured 
	as a whole just by the connectivity properties of the dynamic graph.
Typically,  synchronized rounds with a dynamic graph that 
	is \emph{non-split} \footnote{%
	A digraph is non-split if any two nodes have a common in-neighbor.}
	 at each round  can be  emulated if a minority of  agents may crash or  fail by send omissions.
Since a non-split digraph is rooted,  
	 any safe MinMax algorithm on top of this emulation solves stabilizing consensus despite asynchrony and agent failures.

\begin{cor}\label{cor:failures}
The stabilizing consensus problem is solvable in an  asynchronous system with a complete topology, reliable links,
	and a minority of agents that crash or commit send ommissions.
\end{cor}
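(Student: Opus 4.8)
The plan is to reduce Corollary~\ref{cor:failures} to the already-established Corollary~\ref{cor:bdrooted} via a round-emulation argument, following the strategy sketched in the preceding paragraph. The key conceptual point is that, in the computational model, synchrony assumptions and failures are \emph{absorbed} into the connectivity properties of an emulated dynamic graph: rather than reasoning directly about crashes and send omissions, I would treat them as producing a dynamic communication graph with a weakened but still-rooted connectivity guarantee, and then invoke the solvability result for bounded-delay rooted graphs on top of this emulation.

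First I would recall the emulation result from~\cite{CBS09}: in any distributed system (synchronous or asynchronous) with a known network size, one can emulate synchronized rounds whose communication graph is the actual message-delivery pattern, provided that pattern is \emph{non-split} (any two nodes share a common in-neighbor) at each round. So the first step is to verify that a complete topology with reliable links and a minority of faulty agents — where faults are crashes or send omissions — yields, under the emulation, a dynamic graph that is non-split at every round. The argument is the counting one: since strictly fewer than $n/2$ agents fail to deliver to a given agent, any two agents have overlapping sets of successful senders, and a correct agent that never omits serves as a common in-neighbor; crashes only remove agents permanently but the surviving majority still guarantees a common in-neighbor.

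The second step is purely graph-theoretic and immediate: a non-split digraph is rooted (indeed, a common in-neighbor of all pairs is a central root, hence a root). Therefore the emulated dynamic graph is \emph{permanently rooted}, i.e.\ rooted with delay $T=1$, and in particular rooted with a bounded delay. Applying Corollary~\ref{cor:bdrooted}, any safe MinMax algorithm running on top of the emulation achieves stabilizing consensus in every active run. The final step is to transfer this conclusion back down through the emulation layer: a stabilizing-consensus solution in the emulated round model corresponds to a stabilizing-consensus solution in the underlying asynchronous faulty system, because the emulation preserves the input values and the eventual-agreement and validity properties are stated in terms of the output variables $y_u$, which are identical in both layers.

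The main obstacle I expect is not any single hard step but rather the care required at the \emph{interface}: one must argue that asynchronous starts (agents joining at different rounds $s_u$) and the faulty agents are correctly handled by the emulation so that the run induced on the round model is genuinely an \emph{active} run of a dynamic graph in the bounded-delay-rooted network model — otherwise Corollary~\ref{cor:bdrooted} does not apply. In particular I would need to confirm that the emulation guarantees every correct agent eventually becomes active and that crashed agents, which may never send, are consistent with the model's treatment of passive/inactive nodes (the self-loops and the $\dGa$ construction). Once the non-split property and the activeness of the induced run are pinned down, the rest is a direct citation of Corollary~\ref{cor:bdrooted} together with the observation that a non-split graph is rooted.
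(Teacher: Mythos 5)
Your proposal matches the paper's argument essentially verbatim: emulate synchronized rounds via the Heard-Of construction of~\cite{CBS09}, observe that a minority of crashes or send omissions in a complete topology yields a non-split (hence rooted) communication graph at every round, and apply Corollary~\ref{cor:bdrooted} to a safe MinMax algorithm on top of the emulation. The only quibble is your parenthetical that non-splitness gives a single common in-neighbor of \emph{all} nodes --- non-split is a pairwise condition and the implication ``non-split $\Rightarrow$ rooted'' needs a short separate argument --- but the paper asserts this fact without proof as well, so this is not a gap relative to the paper's own treatment.
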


\section{Concluding Remarks}
In this paper, we studied the stabilizing consensus problem for  dynamic networks  with very few restrictions
	on the computing model and the network.
In particular, we did not restrict link changes, except for retaining  a weak connectivity property, namely rootedness
	over sufficiently long periods of time, captured by the condition of a non-empty kernel.
First we showed that this property is necessary for solving stabilizing consensus, and then proved that it is  
	nearly a sufficient property, in the sense that every safe MinMax algorithm 
	solves stabilizing consensus if the dynamic graph is rooted with a  bounded delay.
Our solvability results for stabilizing consensus and MinMax algorithms are actually the analogs  of the ones 
	 for asymptotic consensus and averaging algorithms.		
	
Our work leaves open several questions.
First, it would be interesting to study whether the stabilizing consensus problem remains solvable when 
	the dynamic graph is rooted with finite but unbounded delays.
That may lead to the design of  algorithms, other than MinMax algorithms,
	that solve stabilizing consensus with no strong connectivity. 
Another related question concerns convergence time.
As demonstrated in Section \ref{sec:converge}, the convergence time of any safe  MinMax algorithm is unbounded 
	even for a dynamic graph that is permanently rooted, i.e., rooted with delay one.
This raises the following question: 
	does there exist another class of stabilizing consensus  algorithms that reach consensus in 
	bounded time -- which might depend on the network size -- for this specific model of dynamic graphs?

\bibliographystyle{plain}

\end{document}